\newcommand{\tup}[1]{\mathbf{#1}}
\newcommand{\eat}[1]{}
\newtheorem{theorem}{Theorem}[section]
\newtheorem{proposition}[theorem]{Proposition}
\newtheorem{example}[theorem]{Example}
\newtheorem{definition}[theorem]{Definition}
\newtheorem{proof}[theorem]{Proof}
\title{Equivalence-Invariant Algebraic Provenance for Hyperplane Update Queries}
\author{
  Pierre Bourhis\\
   \small{NRS, UMR 9189}\\ \small{CRIStAL}
  \and
  Daniel Deutch\\
   \small{Tel Aviv    University}
  \and
  Yuval Moskovitch\\
   \small{Tel Aviv    University}
}
\date{}
\begin{document}

\maketitle

\begin{abstract}
    The algebraic approach for provenance tracking, originating in the semiring model of Green et. al, has proven useful as an abstract way of handling metadata.
    Commutative Semirings were shown to be the ``correct" algebraic structure for Union of Conjunctive Queries, in the sense that its use allows provenance to be invariant under certain expected query equivalence axioms.

 In this paper we present the first (to our knowledge) algebraic provenance model, for a fragment of update queries,
 that is invariant under set equivalence. The fragment that we focus on is that of hyperplane queries, previously studied in multiple lines of work.
 Our algebraic provenance structure and corresponding provenance-aware semantics are based on the sound and complete axiomatization of Karabeg and Vianu.
 We demonstrate that our construction can guide the design of concrete provenance model instances for different applications. We further study the
 efficient generation and storage of provenance for hyperplane update queries.
 We show that a naive algorithm can lead to an exponentially large provenance expression,
 but remedy this by presenting a normal form which we show may be efficiently computed alongside
 query evaluation.
 We experimentally study the performance of our solution and demonstrate its scalability and usefulness, and in particular
 the effectiveness of our normal form representation.
\end{abstract}

\vspace{-0.3cm}
\section{Introduction}


The tracking of {\em provenance} for database queries has been
extensively studied in the past years (see e.g.
\cite{CheneyProvenance, GKT-pods07, ProvenanceBuneman, pods11}). In
a nutshell, data provenance captures details of the computation that
took place and resulted in the generation of each output data item.
Multiple models for data provenance have been proposed, for multiple
query languages such as the (positive) relational algebra, datalog
(see \cite{GKT-pods07}), data-intensive workflows (e.g.,
\cite{DavidsonF08, karma}) data mining \cite{GlavicSAM13}, and
data-centric applications \cite{vldbj15}. Provenance has been proven
useful for managing access control, trust, hypothetical reasoning,
view maintenance and debugging (see
\cite{GeertsUKFC16,GreenKIT07,Caravan, GKT-pods07, icde16}).

 The approach advocated by \cite{GKT-pods07} is based on designing
 {\em algebraic} provenance structures whose equivalence axioms
 are based on equivalences in the formalism for which provenance is designed to be tracked.
 This guarantees that by design, equivalent queries/programs in the formalism of interest will
 have equivalent provenance for their output. In a sense, this means that provenance captures the
 ``essence of computation" that has been performed. The commutative semiring model of \cite{GKT-pods07} achieves this property
 for the positive relational algebra; several extensions have been studied \cite{pods11, glavic, KarvounarakisGIT13} for different query languages.

In this paper we focus on a fragment of {\em update queries} and sequences thereof (which we refer to as ``transactions"),
 and propose a novel algebraic provenance model. The fragment of update queries that we focus on
 is that of {\em hyperplane queries}, introduced in \cite{AbiteboulV88} as simple yet important building blocks of transactions. Hyperplane queries are intuitively ``domain-based", in that selection of tuples in each query only involves the inspection of individual attribute values for each tuple. As demonstrated in \cite{AbiteboulV88,vianu}, this fragment of transactions facilitates appealing theoretical features, while allowing to express transactions of interest. Specifically, for this fragment, \cite{vianu} has shown a sound and complete axiomatization, which is crucial for our provenance model as we next explain. 
 
The provenance annotations in our model are initially
 assigned to both queries and tuples; those assigned to queries are propagated to the tuples that these queries affect, so that the result of applying
 an annotated transaction is an annotated database. Then, in a similar vein to the commutative semiring model mimicking the
 equivalence axioms of positive relational algebra, our model is based on the sound and complete axiomatization
 for set equivalence of transactions in \cite{vianu}. Namely, we start with a most generic structure that uses abstract operations
 to capture the effect of each type of update query, and then introduce, for each of the axioms in \cite{vianu},
 a corresponding axiom in our algebraic structure. As we will show, this leads to a provenance framework that has the following
 favourable property: two transactions are ``provenance-equivalent", i.e., their application on every input database yields the same annotated database, if and only if they are set-equivalent.
This means that provenance in our framework is independent of the particular way that the transaction is executed and of any optimizations that may take place.  To our knowledge, ours is the first provenance model to satisfy this property for transactions (see discussion of previously proposed models in Sections \ref{sec:setEq} and \ref{sec:related}). Details of our provenance model appear in Section \ref{sec:model}.

By propagating annotations that are assigned to both queries and tuples, we are able to support multiple applications of interest, which we overview in Section \ref{sec:app}.  For instance, analysts may use the resulting provenance to conduct {\em hypothetical reasoning} with respect to both the database and transaction. Namely, by assigning  truth values to tuple and/or query annotations in the resulting provenance expressions, they may observe the effect of deleting a tuple or aborting a transaction, on the computation result. Additional examples include the support of access control, where each tuple/query is associated with access credentials and these are propagated so that we compute access credentials for each output tuple; and a ``certification" example, where we assign trust level to each tuple/query and correspondingly produce certifications to output tuples we trust. As is the case with previous algebraic provenance constructions, the idea is that we may first compute an expression in the ``most general" structure (detailed in Section \ref{sec:model}), and then upon request ``specialize" (map) it to any  application domains such as those we have just exemplified.  

Further, while our generic structure is quite complex, we
provide a ``prescription" for building instances of it. This is
achieved by establishing a connection with the commutative semiring
model: we show that
 for a simple-to-define class of commutative semirings (see Theorem \ref{th:semring} for details),
 their operators can be easily extended to define operators for our model that do satisfy the axioms.  
 
We then (Section \ref{sec:provmin}) turn to the problem of efficient provenance generation and
storage, for the ``most general" structure.  The model definition already entails an algorithm for
provenance generation, but we show that it may lead to an
exponential blowup of the provenance size with respect to the
transaction length (number of queries). We show that this blowup may
be avoided, leveraging our axioms: we derive algebraic simplification rules that are entailed by the axioms, and consequently  propose a ``normal form" structure for provenance. We show that every provenance expression obtained by applying a
sequence of hyperplane updates may be transformed to this structure.
The expressions that we obtain in this structure are far more
compact: they are in fact linear in the size of the transaction and
input database. Furthermore, we show that we can generate
expressions in this structure on-the-fly during query evaluation,
avoiding a detour through the exponentially large representation.

Finally, we present (Section \ref{sec:exp}) an experimental study of our  framework using the
TPC benchmark as well as a synthetic dataset. The experiments focus on the time
and space overheads incur by provenance tracking, and on the time it takes to ``specialize" provenance once it is computed, i.e., assign values to variables and thereby use it in applications such as described above. Our measurements are performed for implementations with and without the normal form optimization. 
Our results show that our rewrite of provenance
into its normal form (made possible due to our axioms) significantly reduces the provenance size, and may be efficiently performed alongside with provenance generation. Thereby, it also significantly benefits provenance applications, accelerating provenance use (assignment of values). 




    \vspace{-0.2cm}
    \section{Preliminaries}
\label{sec:prelim}
Our goal is to define an algebraic provenance model for updates.
In this paper, we focus on the class of ``domain-based''  updates defined in \cite{AbiteboulV88}.
This class is a standard model that was studied e.g., in
\cite{MontesiT02, MontesiT95, vianu}. Importantly, \cite{vianu} has
proposed a sound and complete axiomatization for this fragment,
which will serve as a basis for our algebraic provenance model. We
describe the class of ``domain-based'' transactions in a
datalog-like language, similar to the one in \cite{icde16}.

{\bf Relational Databases.} A \emph{relational schema} is defined
over a set of relational names. A \emph{relation} has a relation
name $R$ and a set of attributes
denoted by $\mathrm{att}(R)$ . Let $\mathcal{V}$ be an infinite set
of values. A tuple $t$ of relation $R$ is a function associating
with each attribute of $R$, a value of $\mathcal{V}$. An
\emph{instance} $I$ of a relation $R$ is a set of tuples. A database
$D$ of a relational schema associates with each relation name $R$ in
the schema an instance, denoted by $R(D)$.


{\bf Hyperplane Update queries.} We next recall the definition of
update queries from \cite{AbiteboulV88} for the class of
``domain-based" transactions, where the selection of tuples only
involves the inspection of individual attribute values for each
tuple. 
We restrict the updates queries of \cite{icde16} to those
equivalent to a member of this class.

To this end, we use the notation  $R(\tup{u})$ where $\tup{u}$ is a tuple with the same arity as $R$, that may contain constants and variables. A variable $A$ in $\tup{u}$ may further be associated with a disequality expression $[A \neq a]$, restricting assignments so that the attribute in the corresponding position may not be assigned the value $a$. We say that a tuple $t \in R$ satisfies $\tup{u}$ and write $t \vDash \tup{u}$ if $t$ corresponds to an instantiation of the variables of $\tup{u}$ that satisfy the conditions.


\begin{example}
    Figure \ref{fig:initTable} shows a fragment of a {\em products} table
    in
    an E-commerce application. It includes information about the products in stock,
    their categories and price (ignore the annotations next to tuples for now).
    The following is an hyperplane query used to describe all products in the Sport category except for the ``Kids mountain bike":
$$products([p\neq ``\text{Kids mnt bike}"],\text{``Sport"}, c)\text{:-}$$
The tuple $products(\text{``Tennis Racket"},\text{``Sport"}, \$70)$
satisfies the conditions specified in the query.
    \end{example}

%

\begin{figure*}
    \centering
\tiny{
    \begin{minipage}[b]{0.45\linewidth}
            \centering
        \begin{tabular}{llcl}
            \cline{1-3}
            Product     & Category & Price &  \\ \cline{1-3}
            Kids mnt bike    & Sport & \$120 & $p_1$    \\ 
            Tennis Racket & Sport & \$70 & $p_2$    \\ 
            Kids mnt bike    & Kids & \$120 & $p_3$    \\ 
            Children sneakers& Fashion & \$40 & $p_4$    \\ \cline{1-3}
        \end{tabular}
        \subcaption{Initial Table}
        \label{fig:initTable}

    \end{minipage}%
    \begin{minipage}[b]{0.45\linewidth}
            \centering
        \begin{tabular}{llcl}
            \cline{1-3}
            Product     & Category & Price &  \\ \cline{1-3}
            Kids mnt bike    & Bicycles & \$120 & $(p_1+ p_3)\cdot_M p$    \\ 
            Tennis Racket & Sport & \$70 & $p_2$    \\ 
            Lego bricks & Kids & \$90 & $p$    \\ \cline{1-3}
            \\
        \end{tabular}
        \subcaption{Updated Table}
  
        \label{fig:updatedTable}
    \end{minipage}
}

    \caption{Products Table}\label{fig:database}

\end{figure*}


\textbf{Insertion.}\ An insertion query $Q$  is an expression
$R^{+}(\tup{u})$:- where $\tup{u}$ is a tuple of constants with the
same arity as $R$. The effect of $Q$ applied to a database $D$,
denoted by $Q(D)$, is the insertion of  $\tup{u}$ to $R(D)$.

\begin{example}\label{ex:insertQuries}

The query
        $$Products^{+}(\text{``Lego bricks", ``Kids", }\$90)\text{:-}$$
        is an example of an insertion query, adding the tuple (``Lego bricks", ``Kids", \$90) to the $Products$ table.
\end{example}
Note that each insertion query inserts a single tuple, as in~\cite{vianu}; we will consider transactions as means for inserting a
bulk of tuples.


\textbf{Deletion.}\ A deletion query $Q$ is an expression
$R^{-}(\tup{u})$:-,  where $\tup{u}$ is a tuple with the same arity
as $R$, that may contain constants and variables, possibly
associated with disequalities.  $Q(D)$ is the resulting database
obtained from $D$ by deleting all tuples of its relation $R$ that
satisfy $\tup{u}$.


%

\begin{example}\label{ex:deleteQuries}
    Reconsider the database fragment presented in Figure \ref{fig:initTable}. The query
    $$Products^{-}(a, \text{ ``Fashion", }b)\text{:-}$$
    deletes all tuples in the fashion category.
\end{example}

\textbf{Modification.}\ A modification query $Q$ is an expression
$R^{M}(\tup{u_1, u_2})\text{:- }$, where $\tup{u_1} = (u^1_0,
\ldots, u^1_n)$ and $\tup{u_2} = (u^2_0, \ldots, u^2_n)$  have the
same arity as $R$ and may contain variables and constants such that
either $u_i^1 = u_i^2$ (and then the value for this attribute
remains intact)  or $u_i^2$ is a constant (and then the value is
changed to  $u_i^2$). I.e. the constants present in $\tup{u_2}$
which are different from the corresponding variables/constants in
$\tup{u_1}$ indicate how instantiations of $\tup{u_1}$ are modified.
The result of applying $Q$ to a database $D$ is defined as follows:
for each valid assignment to $\tup{u_1}$ and $\tup{u_2}$, the tuple
$t$ of $R$ whose values correspond to the instantiation of
$\tup{u_1}$ is deleted; the tuple $t'$ whose values correspond to
the instantiation of $\tup{u_2}$ is inserted. We use $t
\rightsquigarrow t'$ to denote that $t$ was updated to $t'$.

\begin{example}\label{ex:modifyQuries}
    The query\\
   $Products^{M}(\text{``Kids mnt bike"}, a, b, \text{``Kids mnt bike", ``Bicycles"}, b)\text{:-}$

    is a modification query. Applying the query to the database fragment
    shown in Figure \ref{fig:initTable} results in an update of the category (second attribute)
    of the product ``Kids mnt bike" to ``Bicycles".
    Namely, we have that (``Kids mnt bike", ``Sport", \$120)$ \rightsquigarrow $ (``Kids mnt bike", ``Bicycles", \$120) and (``Kids mnt bike", ``Kids", \$120)$ \rightsquigarrow $ (``Kids mnt bike", ``Bicycles", \$120).

\end{example}

A \emph{transaction} $T$ is a sequence of update queries. Its semantics
with respect to a given database $D$ is that the update queries are
applied sequentially, with each query in the sequence being applied
to the result of the transaction prefix that preceded it. The database instance resulting from the application of $T$ over $D$ is denoted by $T(D)$.

The result of applying the update queries from Examples
\ref{ex:insertQuries}, \ref{ex:deleteQuries} and
\ref{ex:modifyQuries} as a sequence to our example relation, is
shown in Figure \ref{fig:updatedTable}.

\paragraph*{Note.} Hyperplane queries correspond to the following fragment of SQL: (1) tuple insertions; (2) deletion using statements of the following form: \textsc{DELETE FROM} \emph{RelationName} \textsc{WHERE} $s_1, \cdots s_m$, in which each $s_i$ is of the form \emph{AttributeName} op $c$, where op  is in $\{=,\neq\}$ and $c$ is a constant value; (3) updates using statements of the form: \textsc{UPDATE} \emph{RelationName} \textsc{SET} $l_1, \cdots, l_n$ \textsc{WHERE} $s_1, \cdots s_m$, in which each $l_i$ and $s_i$ is of the form \emph{AttributeName} op $c$, where op  is in $\{=,\neq\}$ and $c$ is a constant value. This fragment has been identified in \cite{vianu} and subsequent works as an important building block of transactions, even though it does not capture the full generality of SQL. For instance, hyperplane queries cannot capture comparison between values inside the same tuple, or subqueries  in the \textsc{WHERE} condition.
\section{Provenance Model}
\label{sec:model}

We define an algebraic provenance model for transactions whose
design follows the following principle: introduce the most general
model that is still insensitive to rewriting under (set)
equivalence. This is in line with the approach advocated for in \cite{GKT-pods07}: the main idea is to start by having a domain of basic annotations (which one may consider as identifiers), and to define the effect of query operators over these annotations via generic algebraic operations. The resulting provenance is then a symbolic algebraic expressions over basic annotations. The next step is to  
add equivalence axioms to the structure so that semantically equivalent symbolic expressions -- ones obtained for set-equivalent queries -- are indeed made equivalent in the structure. It is then we can say that our provenance model captures the ``essence of computation"
defined by the queries, rather than the query structure; the axioms will also allow for optimizations that will be the subject of subsequent sections.

\subsection{Algebraic Structure}
\label{sec:algebStruct}
In a similar vein to the semiring construction of \cite{GKT-pods07},
we start with a basic set of annotations $X$. These could be thought
of as identifiers, which in our case will be associated not only
with tuples but also with individual queries, and propagated to the
tuples they ``touch". We then introduce a structure called $UP[X]$
(``UP" standing for updates) as follows. As a most general
structure, we will start by using six algebraic operations (we later show that five operations are sufficient): $+_I$ and $-_D$
which will serve as abstract operations to capture provenance for
{\em insertion} and {\em deletion} respectively; $-_M$ that will be
used in the context of {\em modification}, to capture the original
tuple (before modification); and $+_M$ and $\cdot_M$ that will be
used for the tuple after modification. Last, we will use $+$ (and
$\Sigma$ for summation over a set), to capture disjunction
originated in the query. 

We also introduce a unique element denoted as $0$, that intuitively
will be used to denote an absent tuple, when used as tuple
annotation, or the fact that an updated query has not taken place,
used as query annotation. Expressions in $UP[X]$ are then comprised
of any combination of elements in $X \cup \{0\}$ using these 
operations; we will sometime refer to such expressions as {\em
formulas}.

We still keep these operations
abstract, in that we do not impose any concrete semantics or further equivalences; we will do both later. 

\paragraph*{Annotated Relations} Let $R$ be a (standard) relation schema and let $tup(R)$ be the set of all tuples conforming to $R$. Given a set of annotations $X$, we use the term $UP[X]$-relation $R$ to denote a function from $tup(R)$ to $UP[X]$. The set of all tuples not mapped to $0$ is called the support of $R$ (we will also say that they are ``in" $R$). This means that $R(t)$ is the provenance annotation (intuitively, at this point, an identifier or meta-data) of a tuple $t$, and if this annotation is non-zero then $t$ is said to be in $R$ (later, when we map annotations to values, it will be useful to map an annotation to $0$, to capture, e.g., tuple deletion). A set of $UP[X]$-relations (associated with a schema, in the standard sense) is an $UP[X]$-database.

\paragraph*{Annotated Update Queries and Transactions} We include an {\em annotation} as part of update query specifications. Intuitively, this annotation may stand for an
identifier of the query, or any other meta-data associated with it. We fix a set $P$ of symbols to
be used as query {\em annotations} and attach them to the heads of queries. For instance, the head of a
provenance-aware insertion query has the form $R^{+,p}(\tup{u})$:-,
where $p \in P$ is the annotation; similarly for deletion and
modification. For example, the query
$Products^{+,p}(\text{``Lego bricks", ``Kids", }\$90)\text{:-}$ is an annotated insertion query with the
annotation being $p$. Similarly, we will use $T^p$ to denote a transaction $T$ annotated by $p$ (i.e., its queries are annotated by $p$).

{\bf Provenance for hyperplane queries.} We are now ready to define provenance for queries. In what follows, let $R$ be an $UP[X]$-relation. We consider different types of update queries $Q$, and use $R'$ for the $UP[X]$-relation that is the result of applying $Q$ to $R$. For example, 
$R$ may be annotated by basic annotations (identifiers) and $R'$ by annotations capturing the computation; but the framework is compositional, so it may be the case that $R$ is already annotated by more complex annotations.  The resulting $UP[X]$-relation is as follows.

%
%
%
%
%
%

\begin{itemize}
    \item If $Q \equiv R^{+,p}(t)$:- then we define $R'(t)= R(t) +_{I} p$, and for each $t' \neq t$ we define $R'(t')=R(t')$.

    \item If $Q \equiv R^{-,p}(\tup{u})$:-, then
    $R'(t)= R(t) +_{D} p$ for each tuple $t \in R,~t\vDash \tup{u}$  and $R'(t')=R(t')$ otherwise.

    \item If $Q \equiv R^{M,p}(\tup{u_1},\tup{u_2})$:-, then $R'(t_1)= R(t_1) -_M p$ for each
    $t_1 \in R~t_1\vDash \tup{u_1}$ and $R'(t_2) = R(t_2)+_{M}( (\sum_{t_1\rightsquigarrow t_2 } R(t_1))\cdot_M p)$, for each $t_2$ s.t. $\exists t_1\vDash \tup{u_1}~t_1 \rightsquigarrow t_2$. The operator $\sum$ here stands for a disjunctive operator associated to the query. In particular, it is different than $+_M$ and $+_I$. Otherwise $R'(t) = R(t)$.
%
%
%
%
\end{itemize}

%
%
%

Note that each algebraic operator in the provenance expression is designed to capture provenance for a query operator. This correspondence will manifest itself in our equivalence axioms below. For now, we only add a special treatment for the $0$ value we have introduced (these will be referred to as ``zero-related axioms" below). Recall that if $t\not\in  R$ then $R(t) = 0$. Thus we define $\forall a\in X$
\begin{itemize}
    \item $0~op~a = 0$ if  $op\in \{-_M, -_D\}$
    \item $0~op~a = a$ if  $op\in \{+_M, +_I\}$
    \item $a~op~0 = a$ for $op\in \{+_I, +_M, -_M, -_D\}$
    \item $a \cdot_{M} 0 = 0 \cdot_M a = 0$
\end{itemize}

Intuitively, if $t\not\in R$, then deleting or modifying $t$ does
not change $R$, and $t$ remains absent from $R$, thus $0~op~a = 0$
if  $op\in \{-_M, -_D\}$.  The existence of an inserted tuple
$t\not\in R$ by a query annotated with $a$ depends only on $a$ and
thus $0+_Ia = a$. Similarly for an updated tuple $t\rightsquigarrow
t'$ for $t'\not\in R$. In a way, the righthand element of the
operators $+_I, +_M, -_M$ and $-_D$ may be interpreted as a
condition for the update, i.e., if the condition is 0, the update
did not take place. Therefore $a~op~0 = a$ for $op\in \{+_I, +_M,
-_M, -_D\}$. Finally, the expression $a\cdot_{M} b$ is used to
capture the fact that a tuple annotated by $a$ is updated by a query
annotated by $b$ to produce an updated tuple. If $a = 0$, the tuple
is not in the database; if $b = 0$ the query has not taken place. In
both cases the updated tuple was not generated, thus $a \cdot_{M} 0
= 0 \cdot_M a = 0$.

We note that, in our setting, for different use-cases it is possible to assign variables the values $1$ or $0$ (as we demonstrate in Section \ref{sec:appExample}). Then, for example, starting from the expression $p_1 +_M (p_2 \cdot_M,p)$, the assignment of the value $1$ to $p$ results in the expression $p_1 +_M p_2$. By further assigning the value $0$ to $p_2$ we obtain the expression $p_1$.


\begin{example}\label{ex:prov}
    Reconsider the database fragment shown in Figure \ref{fig:initTable}, and the annotated update query:
\begin{multline*}
    Products^{M,p}(\text{``Kids mnt bike"}, a, b, \text{``Kids mnt bike", ``Bicycles"}, b)\text{:-}
    \end{multline*}
    By applying the query, the tuple (``Kids mnt bike", Sport, \$120), annotated by $p_1$ and the tuple (``Kids mnt bike", Kids, \$120), annotated by $p_3$
    are updated to (``Kids mnt bike", Bicycles, \$120), which is not in the database, thus annotated by $0$. As a result the new
    tuples
    annotations are $p_1 -_M p$, $p_3 -_M p$ and
    $0 +_M (p_1+ p_3)\cdot_{M}p = (p_1+ p_3)\cdot_{M}p$ respectively.

\end{example}

%

{\bf Provenance of a transaction.} For a given transaction, we annotate it -- i.e., all of its update queries -- with an annotation $p$ (a single annotation is used per transaction, reflecting the grouping of queries to a transaction). We apply the queries in the transaction one by one, using the above definitions to compute the provenance of tuples they ``touch": the $i$'th update is applied on the annotated database obtained from applying the first $i-1$ updates.

\begin{example}\label{ex:transaction}
	Figure \ref{fig:trans1} depicts an example of a transaction over the database fragment given in Figure \ref{fig:initTable}.
    The resulting database from the transaction includes the tuple $Products(\text{``Kids mnt bike", ``Kids"}, \$120)$ with the provenance annotations $p_3 -_M p$ due to the first query. The annotation of the tuple $Products(\text{``Kids mnt bike", ``Sport"}, \$120)$ is $(p_1 +_M (p_3\cdot_{M} p))-_M p$, where the part in the parentheses is the result of the first query. Finally, the annotation of the tuple $Products(\text{Kids mnt bike, Bicycles}, \$120)$ is $0+_M ((p_1 +_M (p_3\cdot_{M} p))\cdot p)$, where the sub-expression $p_1 +_M (p_3\cdot_{M} p)$ comes from the provenance annotation of the tuple $Products(\text{``Kids mnt bike", ``Sport"}, \$120)$ after the execution of the first query.

\end{example}

\begin{figure}
	
	\begin{subfigure}[b]{\linewidth}
		\footnotesize{
		\begin{multline*}
		Products^{M,p}(\text{``Kids mnt bike", ``Kids"}, c, \text{``Kids mnt bike", ``Sport"}, c)\text{:-}
		\end{multline*}
		\vspace{-0.5cm}
		\begin{multline*}
		Products^{M,p}(\text{``Kids mnt bike", ``Sport"}, c, \text{``Kids mnt bike", ``Bicycles"}, c)\text{:-}
		\end{multline*}
		\vspace{-0.5cm}
		\caption{Transaction $T_1$}
		\label{fig:trans1}
}
	\end{subfigure}%

\begin{subfigure}[b]{\linewidth}
	\footnotesize{
	\begin{multline*}
	Products^{M, p}(\text{``Kids mnt bike", ``Kids"}, c, \text{``Kids mnt bike", ``Bicycles"}, c)\text{:-}
\end{multline*}\vspace{-0.5cm}
\begin{multline*}
Products^{M, p}(\text{``Kids mnt bike", ``Sport"}, c, \text{``Kids mnt bike", ``Bicycles"}, c)\text{:-}
\end{multline*}
	\caption{Transaction $T'_1$}
	\label{fig:trans'}
}
\end{subfigure}%
\bigskip

\begin{subfigure}[b]{\linewidth}
\centering
\footnotesize{
	$Products^{M,p'}(a, \text{ ``Sport"}, c,a,\text{ ``Sport"},50)\text{:-}$

	\caption{Transaction $T_2$}
	\label{fig:trans2}
}
\end{subfigure}%
	\vspace{-0.2cm}
	\caption{Transactions}
	\label{fig:trans}
		\vspace{-0.5cm}
\end{figure}
%
%

\subsection{Algebraic Axiomatization}\label{sec:axioms}

The operations we have introduced so far lead to a very abstract
notion of provenance tracking which essentially requires full
tracking of the operation of the update queries that took place, without
allowing for any simplifications.


A fundamental question is what
simplifications can take place, while still capturing the ``essence"
of updates that took place? To this end, we note that \cite{vianu} has introduced a sound and
complete axiomatization of set equivalence for update queries. Combining this axiomatization with our basic
provenance definition, we obtain a set of equivalence axioms over expressions in
$UP[X]$.  We next exemplify the derivation of axioms in our structure based on \cite{vianu}:

\begin{example}
    Based on \cite{vianu}, the following transactions are equivalent

    \begin{center}
        \begin{tabular}{l c l}
        \begin{tabular}{ l c l }
            $R^{M,p}(\tup{u_1},\tup{u_2})$:-\\
            $R^{-,p}(\tup{u_2})$:-
        \end{tabular}
        &
        $\sim$
        &
        \begin{tabular}{ l c l }
            $R^{-,p}(\tup{u_1})$:-\\
            $R^{-,p}(\tup{u_2})$:-
        \end{tabular}
    \end{tabular}
    \end{center}
    
    Note that $\forall t_1\vDash \tup{u_1}$ the provenance expression after the transaction on the left is $R(t_1)-_{M}p$ and after the transaction on the right, $R(t_1)-_{D}p$, thus $a-_{D}b = a-_{M}b$, i.e.,  $-_M$ and $-_D$ are equivalent, and therefore, from now on we use ``$-$" to denote both. Furthermore, $\forall t_2\vDash \tup{u_2}$, from the left transaction we obtain the expression $\Big(R(t_2)+_M\Big(\big(\sum_{\substack{t_1\vDash \tup{u_1}\\ t_1\rightsquigarrow t_2}}R(t_1)\big)\cdot_{M}p\Big)\Big)~-_D~p$, and from the right transaction the expression
    $R(t_2)-_{D}p$. The sum in the first expression represents the set of tuples that are updated into a single tuple. In case there is only one such tuple, it contains a single element, and thus we obtain that $\Big(a +_{M}(b\cdot_M c)\Big)- c = a-c$ for all $a,b$ and $c$.
    

\end{example}

We simplified the axioms and removed redundancies to obtain the  set of equivalence axioms shown in Figure \ref{fig:axioms}.
\begin{figure}
        \input{axioms}
		\caption{Axioms} \label{fig:axioms}
\end{figure}

Note that we have introduced the minimal set of axioms based on \cite{vianu}. When specializing into concrete structures (see below), one may impose further reasonable axioms such as commutativity of $+_{I}$.

A formula can be {\em rewritten} into another formula by applying a sequence of axioms. This rewriting is bidirectional, thus forming an equivalence relation: two formulas $\phi_1$ and $\phi_2$ of our update algebraic structure are equivalent if and only if there is a sequence axioms such that $\phi_1$ can rewritten into $\phi_2$ by using the axioms. We denote it by $\phi_1 \equiv_{UP[X]} \phi_2$.

\subsection{Preserving Provenance Under Set Equivalence}
\label{sec:setEq}

We next state the main property of our construction: two transactions yield the same provenance-aware result if and only if they are set-equivalent. We first define equivalence of transaction under our provenance-aware semantics:

\begin{definition}

    We say that two UP[X]-relations $R,R'$ are $UP[X]$-equivalent, and denote $R \equiv_{UP[X]} R'$, if for every tuple $t$ we have that $R(t) \equiv_{UP[X]}R'(t)$\footnote{Note that in particular, $UP[X]$ equivalence implies that the two relations include the same set of tuples.}. We further say that two UP[X]-databases $D,D'$ are $UP[X]$-equivalent (denote $D \equiv_{UP[X]} D'$) if there is an isomorphism between the relation names in $D$ and $D'$ so that matching relations are $UP[X]$-equivalent.

    Finally, we say that two annotated transactions $T^{p}_1$ and $T^{p}_2$ are $UP[X]$-equivalent, and denote $T^{p}_1 \equiv_{UP[X]} T^{p}_2$ if for every $UP[X]$-database $D$, we have that $T_1^p(D) \equiv_{UP[X]} T_2^p(D)$.
\end{definition}

We further say that for non-annotated transactions $T_1, T_2$, they are set-equivalent, and denote by $T_1 \equiv_{B} T_2$, if for every database $D$, we
have that $T_1(D) \equiv T_2(D)$, where ``$\equiv$'' now stands for standard isomorphism between the databases.

We are now ready to state the following result:

\begin{proposition}\label{prop:equivalece}
    For every two transactions $T_1,T_2$ we have that $T_1 \equiv_{B} T_2$ if and only if $T_1^p \equiv_{UP[X]} T_2^p$.
\end{proposition}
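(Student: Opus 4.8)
The plan is to prove the two implications separately, leaning on the sound and complete axiomatization of \cite{vianu} only for the forward direction. For $T_1 \equiv_B T_2 \Rightarrow T_1^p \equiv_{UP[X]} T_2^p$, I would invoke completeness of Vianu's axiomatization to obtain a finite derivation $T_1 \rightsquigarrow \cdots \rightsquigarrow T_2$ in which every step rewrites a contiguous sub-transaction by a Vianu-equivalent one. The heart of the argument is a bridging lemma: for each of Vianu's axioms $T_L \sim T_R$, every $UP[X]$-relation $R$, and every tuple $t$, the annotations $T_L^p(R)(t)$ and $T_R^p(R)(t)$ are equivalent under the axioms of Figure \ref{fig:axioms} together with the zero-related axioms. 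This is exactly the correspondence used to derive those axioms (cf.\ the worked Example preceding Figure \ref{fig:axioms}), so I would discharge it axiom by axiom, with the delicate cases being modification-modification interactions in which several source tuples are merged into one target --- these are absorbed by the partition axiom \eqref{modificationAxiom2_2} and by \eqref{split_axiom} and \eqref{interchange_axion_1}. To promote these local equivalences to an equivalence of whole transactions, I would observe that $\equiv_{UP[X]}$ is a congruence: axioms may be applied inside any subformula, and if $R \equiv_{UP[X]} R'$ then $Q(R) \equiv_{UP[X]} Q(R')$ for every hyperplane query $Q$, where the footnote's remark that $UP[X]$-equivalent relations share the same support guarantees that the disjunctive sums $\sum_{t_1 \rightsquigarrow t_2}$ of a later query range over identical index sets. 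An induction on the length of the derivation then gives $T_1^p(D) \equiv_{UP[X]} T_2^p(D)$ for all $UP[X]$-databases $D$.

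For the converse I would not use \cite{vianu} at all, but instead build a faithful two-valued specialization. Let $\mathbb{B} = \{0,1\}$ interpret $+_I$, $+_M$ and $\sum$ as Boolean disjunction, $\cdot_M$ as conjunction, and $a - b$ as $a \wedge \neg b$; a routine check shows $\mathbb{B}$ satisfies every axiom of Figure \ref{fig:axioms} and all zero-related equalities. Hence the evaluation map $h$ that sends every basic annotation and the query annotation $p$ to $1$ (and $0$ to $0$) is a homomorphism $UP[X] \to \mathbb{B}$, so $\phi_1 \equiv_{UP[X]} \phi_2$ forces $h(\phi_1) = h(\phi_2)$. The key step is a faithfulness lemma, proved by induction on transaction length: lifting a standard database $D$ to the $UP[X]$-database $D^\ast$ that annotates each present tuple by a distinct variable, one has $h\big(T^p(D^\ast)(t)\big) = 1$ iff $t \in T(D)$ --- i.e.\ under $h$ the provenance rules for insertion, deletion and modification reduce exactly to the ordinary set semantics. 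Granting this, suppose $T_1 \not\equiv_B T_2$; then some $D$ and $t$ witness $t \in T_1(D) \setminus T_2(D)$ (or the reverse), whence $h(T_1^p(D^\ast)(t)) \neq h(T_2^p(D^\ast)(t))$, and the homomorphism property yields $T_1^p(D^\ast)(t) \not\equiv_{UP[X]} T_2^p(D^\ast)(t)$, so $T_1^p \not\equiv_{UP[X]} T_2^p$.

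The main obstacle lies in the forward direction, and specifically in the bridging lemma rather than the surrounding bookkeeping. Vianu's axiomatization is complete for \emph{set} equivalence, yet our provenance semantics keeps deleted and modified tuples alive with nonzero annotations, so a single Vianu step may induce a genuinely nontrivial rewriting of provenance expressions. The real risk is that the \emph{simplified} axiom set of Figure \ref{fig:axioms}, from which redundancies were removed, might fail to derive some induced provenance equality; establishing that our axioms are complete \emph{relative to} Vianu's --- not merely sound --- is the crux, and it is most demanding precisely for the merge cases handled by \eqref{modificationAxiom2_2}. By contrast, once $\mathbb{B}$ is shown to model the axioms, the converse is essentially mechanical, as it reduces to an induction that matches the Boolean reading of each update rule against the standard effect of the corresponding query.
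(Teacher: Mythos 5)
Your forward direction is exactly the paper's argument: invoke completeness of the axiomatization of \cite{vianu} to obtain a derivation $T_1 \rightsquigarrow \cdots \rightsquigarrow T_2$, prove a bridging lemma matching each Vianu axiom to corresponding axiom(s) of Figure~\ref{fig:axioms}, and induct on the derivation length; your identification of the merge cases handled by \eqref{modificationAxiom2_2} as the delicate step, and of the relative completeness of the \emph{simplified} axiom set as the real content of that lemma, is exactly the part the paper's sketch asserts without detail. Where you genuinely diverge is the converse. The paper dismisses it in one line --- ``by definition, $UP[X]$-equivalence implies set-equivalence'' --- leaning on the footnote that $UP[X]$-equivalent relations share the same support. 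You instead construct a Boolean structure $\mathbb{B}$ satisfying all the axioms and prove a faithfulness lemma reducing the annotated semantics to the ordinary set semantics under the induced evaluation homomorphism. Your route is longer but it closes a real gap in the one-liner: in this model deleted and modified tuples are \emph{not} removed, so a deleted tuple carries the nonzero annotation $R(t) - p$ and the support of $T^p(D)$ does not literally coincide with $T(D)$; some evaluation into a structure that sends ``really absent'' tuples to $0$ --- precisely your $\mathbb{B}$, which is the paper's own deletion-propagation instance from Section~\ref{sec:appExample} --- is what actually extracts set-equivalence from $UP[X]$-equivalence. Both approaches reach the same conclusion, but yours makes the easy direction honest at the cost of having to verify that $\mathbb{B}$ models every axiom, a check the paper defers to its applications section.
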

\begin{proof}(sketch)
By definition, $UP[X]$-equivalence implies set-equivalence, thus one direction is trivial. For the other direction, the completeness of axioms from \cite{vianu} guarantees there is a sequence of such axioms   whose application transforms $T_1$ into $T_2$. The proof is then by induction, where for each individual axiom we apply  a corresponding axiom(s) to the provenance. 
\end{proof}

\begin{example}\label{ex:eqTrans}
	Reconsider the database fragment given in Figure \ref{fig:initTable}. According to \cite{vianu}, by modification axiom 2, the  transaction $T_1$ given in Figure \ref{fig:trans1} is set-equivalent to the transaction $T'_1$ in Figure \ref{fig:trans'} .
 	The effect of both is that the tuples $Products^{M}(\text{``Kids mnt bike", ``Kids"}, \$120)$ and \\
 	$Products^{M}(\text{``Kids mnt bike", ``Sport"}, \$120)$ are updated into a single tuple \\
 	$Products^{M}(\text{``Kids mnt bike", ``Bicycles"}, \$120)$. Indeed, the provenance expressions obtained by both are equivalent. The provenance of the tuple \\
 	$Products(\text{``Kids mnt bike", ``Kids"}, \$120)$ is $p_3 -p$, in both cases. The annotation of the tuple $Products(\text{``Kids mnt bike", ``Sport"}, \$120)$ using the latter transaction is $p_1 -p$ and is equivalent to $(p_1 +_M (p_3\cdot_{M} p))-p$ by axiom 2. Last, the annotation of the tuple $Products(\text{Kids mnt bike, Bicycles}, \$120)$ in the database obtained by $T'_1$ is $(0+_M (p_3 \cdot_{M} p)) +_M (p_1\cdot_{M} p)$. By axiom 3 (and using $a = 0$, $I=S_1=\{p_3\}$, $\sum^n_{i=1} b_i = p_1$) it is equivalent to $0+_M ((p_1 +_M (p_3\cdot_{M} p))\cdot p)$, which is 
	the provenance of this tuple obtained by $T_1$ as shown in Example \ref{ex:transaction}, and thus the databases resulting by the two transactions are equivalent.
\end{example}

\paragraph*{Sequence of transactions} We next demonstrate our construction for a sequence of transactions, where each transaction is annotated using a different provenance annotation. 
\begin{example}\label{ex:trSeq}
	Consider a sequence of two transaction $T_1, T_2$, shown in Figures \ref{fig:trans1} and \ref{fig:trans2} resp. Intuitively, the transaction  $T_2$ updates the price of all the products in the ``Sport" category to $\$50$.  Note that the provenance annotation of $T_2$ is $p'$. The database resulting by the application of this sequence on the database shown in Figure \ref{fig:initTable}, contains (among others) the tuples shown in Figure \ref{fig:transOutput}.
\end{example} 

\begin{figure}
	\centering

		\begin{tabular}{llcl}
			\cline{1-3}
			Product     & Category & Price &  \\ \cline{1-3}
			Kids mnt bike    & Sport & \$50 & $0 +_M ((p_1 +_M (p_3\cdot_{M} p))- p)\cdot_M p'$   \\ 
			Tennis Racket & Sport & \$50 & $0+_M(p_2\cdot_M p')$    \\ \cline{1-3}
	
		\end{tabular}

		\caption{Transaction Output (partial)}
		\label{fig:transOutput}

\end{figure}


As expected, two equivalent sequences of transactions yield equivalent provenance expressions associated with their output tuples:

\begin{example}\label{ex:eqSeq}
	Consider the transactions $T_1, T'_1$ and $T_2$ from Figure \ref{fig:trans}. The provenance expression generated for each tuple $t$ in the database by sequence $T_1, T_2$ is equivalent to the provenance of $t$ generated by the sequence $T'_1, T_2$. For instance, the annotation of the tuple $Products(\text{``Kids mnt bike", ``Sport"}, \$50)$ using the latter sequence is $0+_M(p_1 -p)\cdot_M p'$ and is equivalent to $0+_M ((p_1 +_M (p_3\cdot_{M} p))-p)\cdot_M p'$ by axiom 2. Note that we may further simplify both expressions by removing the $0$. 
	
\end{example}

{\bf Comparison with MV-semirings \cite{glavic}.}
There exists a previously proposed algebraic provenance model for update queries,
called MV-semirings \cite{glavic}. 
This model is an extension of the semiring framework, in the sense that for every
semiring $K$, the corresponding MV-semiring $K^\nu$ is introduced. The elements of such a semiring are symbolic expressions over
elements from $K$, version annotations, and semiring operations
where the structure of an expression encodes the derivation history
of a tuple. For instance, $\mathbb{N}[X]^\nu$ is the MV-semiring
corresponding to the provenance polynomials semiring
$\mathbb{N}[X]$. Using this most general $\mathbb{N}[X]^\nu$ MV-semiring, each
tuple is annotated by a provenance expression consisting of
variables which represent identifiers of freshly inserted tuples,
and version annotations that encode the sequence of updates that
were applied to the tuple. The version annotation
$X^{id}_{T,\nu}(k)$ denotes that operation $X$ ($X$ may be one of
$U$, $I$, $D$, or $C$, which stand for update, insert, delete or
commit respectively) was executed at time $\nu-1$ by transaction
$T$, where $k$ is the annotation of the tuple before the update and
$id$ is the identifier of the affected tuple.

Since an MV-semiring counterpart is defined for every semiring, this model is applicable in 
settings beyond those addressed here, notable including support for bag semantics. Further applications such ones pertaining to concurrency are also developed in \cite{glavic}. For such applications, and by design, the model of \cite{glavic} does not satisfy a counterpart of our Proposition \ref{prop:equivalece}: more details on the specific of the transaction that took place are recorded, and so equivalent transactions may yield non-equivalent expressions in the MV-semiring:


 
   \begin{example}\label{ex:compToGlavic}
 	Consider the equivalent transactions sequences  from Examples \ref{ex:eqSeq}.
 	Using the MV-semiring model, applying the two transactions to
 	the database given in Figure \ref{fig:initTable}
 	results in different provenance expressions. For instance, if the provenance annotations satisfy
 	$p_3 = I^1_{T,2}(x_1)$, then the provenance of the tuple $Products(\text{Kids mnt bike, Bicycles}, \$120)$
 	after applying the first transactions sequence contains an expression of the form  $U^3_{T_2,5}(U^2_{T_1,4}(U^1_{T_1,3}(I^1_{T,2}(x_1))))$
 	while the provenance annotation after applying the second transaction contains
 	an expression of the form $U^2_{T_2,4}(U^1_{T'_1,3}(I^1_{T,2}(x_1)))$.
 	
\end{example}

We have highlighted the theoretical appeal of equivalence-invariance that holds for our model but not for \cite{glavic}; in Section~\ref{sec:provmin} we will show that it also allows to optimize provenance representation, and will further show its practical impact in the experiments. In this context, we note that \cite{glavic} further defines an operation called \textsc{Unv} that intuitively 
 removes the embedded history from  the provenance (the parallel of our ``transaction annotations"), while keeping information coming from the underlying semiring $K$ (the parallel of our ``tuple annotations"). The resulting provenance obtained by applying \textsc{Unv} is then equivalence-invariant, but it does not include sufficient information to, e.g., examine the effect of transaction abortion, assign trust values to transaction queries (see Section \ref{sec:app}, in particular Examples \ref{ex:transAbortion}, and the parts of the discussions on access control and certifications pertaining to transaction annotations) or other retroactively reason about meta-data associated with transaction's queries (in contrast to the data).
 
\begin{example}	
\label{exp:unv} 	
 	Applying the \textsc{Unv} operation to either expressions in Example \ref{ex:compToGlavic} yields the same result: $x_1$, reflecting the relevant tuple from the input database (and in general multiple such tuples and their combination) but not the annotations of update queries that took place. 
 \end{example}

\section{Applications}
\label{sec:app}


We next demonstrate the usefulness of the introduced structure through a concrete semantics assigned to the operators. As we shall illustrate, the general axioms that we have derived above can guide the design of such semantics: care is needed in designing them so that they fit the application of interest, while provenance is still preserved through transactions rewriting.

Each concrete semantics is represented by tuple $(\mathcal{K}, +^{\mathcal{K}}_M, \cdot^{\mathcal{K}}_M, -^{\mathcal{K}} , +^ {\mathcal{K}}_I, +^{\mathcal{K}}, 0^{\mathcal{K}})$ where $\mathcal{K}$ is a set of provenance annotations, and $+^{\mathcal{K}}_M$, $\cdot^{\mathcal{K}}_M$, $-^{\mathcal{K}}$ and $+^ {\mathcal{K}}_I$ are concrete operation over the values in $\mathcal{K}$. We call such tuple Update-Structure.

An important principle underlying the semiring-based
provenance framework is that one can compute an ``abstract" provenance representation and then ``specialize" it in any domain. This ``specialization" is formalized through
the use of semiring homomorphism. To allow for a similar
use of provenance in our setting, we extend the notion of
homomorphism to Update-Structures.


\begin{definition}
Let $S_1 = (\mathcal{K}_1, +^{\mathcal{K}_1}_M, \cdot^{\mathcal{K}_1}_M, -^{\mathcal{K}_1} , +^{\mathcal{K}_1}_I, +^{\mathcal{K}_1},  0^{\mathcal{K}_1})$ and\\ $S_2 = (\mathcal{K}_2, +^{\mathcal{K}_2}_M, \cdot^{\mathcal{K}_2}_M, -^{\mathcal{K}_2} , +^{\mathcal{K}_2}_I, +^{\mathcal{K}_2},  0^{\mathcal{K}_2})$ be two Update-Structures. An homomorphism is a mapping $h: S_1 \mapsto S_2$ such that
\begin{center}
	\begin{tabular}{ll}
	$h(a +^{\mathcal{K}_1}_M  b)= h(a) +^{\mathcal{K}_2}_M h(b)$ &$	h(a \cdot^{\mathcal{K}_1}_M  b)= h(a) \cdot^{\mathcal{K}_2}_M  h(b) $\\
	$h(a -^{\mathcal{K}_1}  b)= h(a) -^{\mathcal{K}_2} h(b)$&
	$h(a +^{\mathcal{K}_1}_I  b)= h(a)+^ {\mathcal{K}_2}_I h(b)$ \\
	$h(a +^{\mathcal{K}_1}b) = h(a)+^{\mathcal{K}_2}h(b)$ &
	$h(0^{\mathcal{K}_1})  = 0^{\mathcal{K}_2}$
\end{tabular}
\end{center}
\end{definition}

Crucially, we may show that provenance propagation commutes with homomorphisms. We use $T(D)$ to denote the database obtained from applying the transaction $T$ on the database $D$, and say that a tuple $t$ in $T(D)$ if $t$ in the resulting database.

\begin{proposition}\label{prop:commutes_with_homo}
    Let $S_1$ and $S_2$ be two Update Structures such that there exists an homomorphism  from $S_1$ to $S_2$.
    Let $D$ be a database instance, $T$ a transaction and $t$ a tuple in $T(D)$. Let $\phi_1(t)$ (respectively $\phi_2(t)$) be the provenance expression of $t$  by $T$ over $S_1$ (respectively $S_2$). We have that $h(\phi_1(t))=\phi_2(t)$. 
\end{proposition}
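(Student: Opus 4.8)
The plan is to prove the identity by induction on the number of update queries in the transaction $T = Q_1; \ldots; Q_n$, mirroring the operational definition of provenance propagation. The crucial enabling observation is that $T$ consists of hyperplane (domain-based) queries, so whether a tuple $t$ satisfies a pattern $\tup{u}$ (i.e.\ $t \vDash \tup{u}$) depends only on the attribute values of $t$ and never on its annotation. Consequently the set of tuples ``touched'' by each query, and the grouping $t_1 \rightsquigarrow t_2$ induced by a modification, are identical whether we evaluate over $S_1$ or over $S_2$. First I would make explicit the compatibility assumption implicit in the statement: the base annotation assigned to each input tuple and to each query symbol in the $S_2$-evaluation is the $h$-image of the corresponding base annotation in the $S_1$-evaluation. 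With this, the base case ($n=0$, where $T(D)=D$) is immediate, since $\phi_i(t)$ is just the base annotation of $t$ and the two agree under $h$ by assumption.

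For the inductive step, write $T = T'; Q_n$, let $R_i$ denote the $S_i$-annotated relation produced by $T'$, and assume as induction hypothesis that $h(R_1(s)) = R_2(s)$ for every tuple $s$. I would then case on the type of $Q_n$ and apply the matching homomorphism identity together with the hypothesis. For an insertion $R^{+,p}(t)$ the new annotation is $R(t) +_I p$; since $h(a +_I b) = h(a) +_I h(b)$, and $h(R_1(t)) = R_2(t)$ by the hypothesis while $h$ maps the $S_1$-value of $p$ to its $S_2$-value by the compatibility assumption, the claim follows, and untouched tuples are handled directly by the hypothesis. The deletion case is analogous using $h(a - b) = h(a) - h(b)$, applied to exactly the set of tuples $t \vDash \tup{u}$, which is the same set in both structures.

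The main work --- and the only genuinely delicate point --- is the modification case $R^{M,p}(\tup{u_1}, \tup{u_2})$, whose target annotation $R(t_2) +_M \big( (\sum_{t_1 \rightsquigarrow t_2} R(t_1)) \cdot_M p \big)$ combines $-$, $+_M$, $\cdot_M$ and the disjunctive sum $\sum$ in one expression. The homomorphism identities cover $+_M$ and $\cdot_M$ directly, but to push $h$ through $\sum$ I would run a short secondary induction on the (structure-independent) number of summands, repeatedly using $h(a + b) = h(a) + h(b)$ together with $h(0) = 0$ for the empty group; this yields $h(\sum_{t_1 \rightsquigarrow t_2} R_1(t_1)) = \sum_{t_1 \rightsquigarrow t_2} R_2(t_1)$, after which the outer $+_M$ and $\cdot_M$ identities finish the case. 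I expect no obstruction from the zero-related treatment: because $h$ preserves each operation and sends $0$ to $0$, any simplification forced on a $0$-annotated (absent) tuple in $S_1$ is mirrored by the corresponding $0$-behaviour in $S_2$, so the two evaluations remain aligned tuple-by-tuple. Assembling the cases closes the induction and gives $h(\phi_1(t)) = \phi_2(t)$ for every $t$ in $T(D)$.
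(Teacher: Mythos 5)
Your proof is correct and is exactly the standard structural-induction argument that the paper leaves implicit (no proof is given there for this proposition): induct on the transaction length, observe that hyperplane selection is annotation-independent so both evaluations touch the same tuples, and push $h$ through each operator using the homomorphism identities. Your explicit compatibility assumption on the base annotations and your attention to the zero case (where $h$ may collapse a non-zero $S_1$-annotation to $0$, shrinking the support on the $S_2$ side) are precisely the points that need care, and the zero-related axioms resolve them in the way you describe.
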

This property allows us to support applications as exemplified next.

\subsection{Example Semantics}
\label{sec:appExample}

We next highlight multiple semantics of interest and their corresponding algebraic structures.

\paragraph*{Deletion Propagation} 
Consider an
analyst who wishes to examine the effect of \emph{deleting a tuple} from
the input database on the result of a sequence of transactions. This may be done without provenance, by
actually deleting the tuple and re-running the sequence. Alternatively, and much more efficiently, if we have provenance we may 
assign truth values to annotations occurring in it. In particular, deleting a tuple corresponds to assigning False to the tuple annotation.
The provenance semantics that allows for deletion propagation is the following 
\begin{center}
	 \begin{tabular}{c}
	$a +_M  b= a +_I  b = a + b := a \vee b $\\
	 \begin{tabular}{lr}
	 	$a \cdot_M  b:= a \wedge b$ &$a -  b:= a \wedge \neg b$
\end{tabular}
\end{tabular}
\end{center}
Where $0$ corresponds to the Boolean value False.
	
\begin{example}\label{ex:delProp}
	Reconsider the transactions sequence $T_1, T_2$ from Example \ref{ex:trSeq}, and the tuple $t = products(\text{``Tennis Racket"},\text{``Sport"}, \$50)$ annotated by $0+_M(p_2\cdot_M p')$ in the output. The scenario where the tuple  $products(\text{``Tennis Racket"},\text{``Sport"}, \$70)$ is omitted from the initial database corresponds to the valuation that assigns False to $p_2$. With the above semantics, in this case, the tuple $t$ will not appear in the output.
	
\end{example}

\paragraph*{Transaction Abortion} 
The same provenance structure allows to examine the effect of \emph{aborting a transaction}, on the result of a sequence of transactions. Again, a naive way to do it is to re-run the sequence while ignoring the aborted transaction, but 
the same results may be achieved efficiently using the provenance information (as we show in Section \ref{sec:exp}): aborting a transaction corresponds to assigning False to the aborted {\em transaction annotation}.

\begin{example}\label{ex:transAbortion}
	Consider again the transactions sequence from Example \ref{ex:trSeq}. The scenario where the first transaction is aborted corresponds to assigning the truth value False to the variable $p$ in the provenance expression. With the above semantics, the provenance expression of the tuple $Products(\text{``Kids mnt bike", ``Sport"}, \$50)$ is evaluated to True, i.e., if we abort the first transaction we would indeed obtain this tuple in the resulting database. 
\end{example} 


\paragraph*{Access Control} Consider an application that supports different products and prices for different countries (e.g., based on different shipping costs and taxes). Each tuple is annotated with a set of country names, such that a user from country $c$ can see a tuple $t$ only if $t$'s annotation contains $c$. Similarly, {\em transactions are also annotated by sets of countries, so that the transaction annotation defines the set of countries that are affected by the update}. For instance, if a deletion query $q$ deletes the tuple $t$ and $q$'s annotation contains the country $c$, then after the deletion the tuple $t$ is no longer available for users from the country $c$.

This semantics may formally be captured in our framework by defining the following provenance operations:
\begin{center}
	\begin{tabular}{c}
 $a +_M  b = a +_I  b= a + b:= a \cup b$\\
 \begin{tabular}{lr}
 	$a \cdot_M  b:= a \cap b$ & $a -  b:= a \setminus b$
 \end{tabular}
\end{tabular}
\end{center}
defined over the domain of sets (whose individual items are, e.g., country names).

\paragraph{Tuples/Transactions Certification}

Consider an application where tuples/transaction are associated with values from $[0,1]$, reflecting their level of trust. Then given a minimal trust level $L$, we wish to know the result of an execution that involve  only transaction and tuple with trust score that exceeds $L$. This can be done by using annotation of the form $a = (v, r)$, where $a.v\in[0,1]$ in the trust score of the tuple/transaction, and $a.r$ is  ``trusted with respect to $L$" and can be one of $T$ (True), $F$ (False) or $U$ (unknown). For brevity of notation, we then use $trusted(x)$ as a macro for 
($x.r=T$) or ($x.r=U$ and $x.v>L$). The operations are then defined through a Boolean structure over the $trusted$ values (note that their corresponding truth values will not be materialized until assigned concrete trust values to input tuples): 
\begin{align*}
    &a +_M  b = a +_I  b= a + b:=
 	 \begin{cases}
 		(1, T)    & \text{if } trusted(a)\text{ or } trusted(b)  \\
 		(0,F)    & otherwise
 	\end{cases}	 \\
	        &a -  b:= \begin{cases}
	            (1, T)    & \text{if } trusted(a) \text{ and } NOT(trusted(b)) \\
	            (0, F)    & otherwise
	        \end{cases}	      \\
        &a \cdot_M  b:= \begin{cases}
        	(1, T)   & \text{if } trusted(a)\text{ and } trusted(b) \\
        	(0,F)   & otherwise
        \end{cases}	      
\end{align*}

We may show that all proposed structures satisfy the axioms from Section  \ref{sec:axioms} (proof omitted for lack of space).

\subsection{From semirings to $UP[X]$-operators}
As discussed above, it is commonplace to define algebraic provenance through semirings. We next show how to transform a commutative semiring -- given that it satisfies some natural constraints -- into an $UP[X]$ structure that can be used for provenance in the presence of update queries.
\begin{theorem}\label{th:semring}
    Let $(K, +_K, \cdot_{K}, 0, 1)$ be a commutative semiring that satisfies $a+_K 1 = 1$ and $a \cdot_K a = a$, then the set of elements $X = K$, with the operators $+_M, +_I, \cdot_{M}$
    defined as follows: $\forall a, b \in X$:
    \begin{center}
        \begin{tabular}{ccc}
          $a +_M b= a +_K b$&
        $a +_I b = a +_K b$&
        $a\cdot_M b  = a\cdot_K b$
    \end{tabular}
    \end{center}

and any $-$ operator that satisfies the axioms 2, 4, 5, 7, 10 and 12 from Section \ref{sec:axioms} with respect to the semiring $+$ and $\cdot$ operators, is an $UP[X]$ structure.

\end{theorem}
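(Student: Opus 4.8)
The plan is to show that the operations induced from $(K,+_K,\cdot_K,0,1)$ satisfy every axiom of Figure~\ref{fig:axioms}, together with the zero-related axioms of Section~\ref{sec:algebStruct}, which is precisely what it means to be an $UP[X]$ structure. First I would fix the interpretation of the operators not written out in the statement: I take the disjunctive operator $+$ (and hence $\Sigma$) to also be $+_K$, and take the distinguished $0$ to be the semiring zero. The organizing observation is that the twelve axioms split cleanly into those that mention $-$, namely axioms 2, 4, 5, 7, 10 and 12, and those that do not, namely axioms 1, 3, 6, 8, 9 and 11. The former hold by hypothesis, since $-$ was chosen exactly so as to satisfy them, so the whole task reduces to verifying the six $-$-free axioms from the semiring laws.

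Several of these are immediate. Once $+_M$, $+_I$ and $\cdot_M$ are read as $+_K$, $+_K$ and $\cdot_K$, axioms 1 and 6 are just commutativity and associativity of $+_K$, and axiom 11 (the split axiom) is a single application of distributivity of $\cdot_K$ over $+_K$ followed by re-associating. None of these uses the two extra hypotheses on $K$.

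The two places where the nonstandard assumptions on the semiring genuinely bite are axioms 8 and 9. For axiom 8 I would distribute $(b +_I c)\cdot_M c = b\cdot_K c +_K c\cdot_K c$ and then collapse $c\cdot_K c$ to $c$ by idempotence $a\cdot_K a = a$, after which both sides reduce to $a +_K b\cdot_K c +_K c$. For axiom 9 the identity reduces to $b\cdot_K c +_K c = c$, which follows from $b\cdot_K c +_K c = (b +_K 1)\cdot_K c = 1\cdot_K c = c$ using the absorption hypothesis $a +_K 1 = 1$ together with $1$ being the multiplicative identity. This is the unique point at which absorption is used, which simultaneously confirms that the hypothesis $a +_K 1 = 1$ is not redundant.

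The main obstacle is the partition axiom 3. Here I would expand both sides by distributivity; on the right each summand carries a factor $d\cdot_K d$, which idempotence collapses to $d$, and the crucial combinatorial point is that, since $\{S_1,\dots,S_n\}$ partitions $I$, the doubled sum $\sum_i \sum_{c\in S_i} c$ equals $\sum_{c\in I} c$ exactly, with nothing counted twice. Consequently no idempotence of $+_K$ is needed, and after re-associating both sides become $a +_K (\sum_{c\in I} c)\cdot_K d +_K (\sum_i b_i)\cdot_K d$. It remains to dispatch the zero-related axioms: those for $+_M$, $+_I$ and $\cdot_M$ follow from $0$ being the additive identity and the multiplicative annihilator of $K$, while those for $-$ follow from the chosen $-$ respecting the semiring zero, as in the concrete structures of Section~\ref{sec:appExample}. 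The entire argument is routine semiring algebra once the $-$/no-$-$ split is made; the only real care needed is tracking the partition in axiom 3 and noting exactly which of the two special hypotheses, $a\cdot_K a = a$ or $a +_K 1 = 1$, each axiom consumes.
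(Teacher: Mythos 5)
Your proof is correct and follows exactly the route the paper indicates but omits: the paper's entire argument is the sentence ``by carefully going through all axioms,'' and your case split---axioms 2, 4, 5, 7, 10, 12 holding by hypothesis on $-$; axioms 1, 6, 11 by commutativity, associativity and distributivity; axiom 8 by multiplicative idempotence; axiom 9 by absorption; and axiom 3 by idempotence of $\cdot_K$ together with the partition structure of $\{S_1,\dots,S_n\}$---is precisely the verification the paper leaves out, and each of those computations checks out. The one loose end is the zero axiom $0 - a = 0$, which you dispatch by fiat (``the chosen $-$ respecting the semiring zero'') even though it does not obviously follow from axioms 2, 4, 5, 7, 10, 12 alone (whereas $a - 0 = a$ does follow, from axiom 10 with $b=0$); this imprecision is inherited from the theorem statement itself, which constrains $-$ only through the listed equivalence axioms, so it is a quibble about the hypothesis rather than a flaw in your argument.
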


The proof is by carefully going through all axioms and is omitted for lack of space.   

\begin{example}
\label{ex:minus}
	Recall the access control example from Section \ref{sec:appExample}. The corresponding semiring is $(\mathcal{P(C)}, \cup, \cap, \emptyset, \mathcal{C})$ where $\mathcal{C}$ is the set of all countries and $\mathcal{P(C)}$ is the power set of $\mathcal{C}$. Note that this is a commutative semiring that satisfies $\forall a \in\mathcal{P(C)}~ a \cup \mathcal{C} = \mathcal{C}$ and $a \cap a = a$. Furthermore, by defining the $-$ operator as set-difference we obtain a structure that satisfies the axioms.
	
	The PosBool semiring $(\mathbb{N[B]}, \vee, \wedge, \bot, \top)$ with the minus operator $a-b = a \wedge(\neg b)$ satisfies the axioms as well. The latter is the structure we demonstrate in the deletion propagation example in Section \ref{sec:appExample}.
	
	Interestingly, the monus operator used in \cite{GeertsP10} to capture relational difference does not generally ``work" as minus in our setting. For instance, our Axiom 10 $((a-b)+b = a+b)$ does
	not hold in general for monus.

	
\end{example}

Note that in particular for this construction $+_I$ and $+_M$ are commutative.

\section{Efficient Provenance Computation}
\label{sec:provmin}

We next consider the issue of complexity: how large may the provenance be? Can it be efficiently computed alongside query evaluation?

\subsection{Naive Construction}
\label{sec:naive}

A first attempt is to generate provenance by directly using the
definitions. That is, starting from the initial instance, we apply
sequentially the update queries. We compute the provenance of each
tuple after each update using the definitions of Section
\ref{sec:model}. Unfortunately, this
approach incurs an exponential blowup in the transaction length.



\begin{proposition}
    There exists a transaction $T$ and a database $D$ with only two tuples  $t_1$ and $t_2$ such that the provenance of $t_1$ and the provenance of $t_2$ after applying $T$ to $D$ is at least exponential in the number of queries.
\end{proposition}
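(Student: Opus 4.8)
The plan is to construct an explicit transaction and a two-tuple database for which the provenance expression of one of the surviving tuples grows exponentially in the transaction length. The essential mechanism to exploit is the modification rule: when a query updates tuple $t_1$ into $t_2$, the new annotation of $t_2$ is $R(t_2) +_M \big((\sum_{t_1 \rightsquigarrow t_2} R(t_1)) \cdot_M p\big)$. The key observation is that this rule makes the \emph{entire} current annotation of $t_1$ a subexpression of the new annotation of $t_2$, and vice versa. If I can arrange a sequence of modifications that repeatedly swaps two tuples back and forth --- each time embedding a full copy of the other tuple's current annotation --- then each step roughly doubles the total expression size, giving the desired exponential blowup.

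\textbf{Construction.} I would start with $D = \{t_1, t_2\}$ where $t_1, t_2$ differ in a single attribute, say $t_1$ has value $v_1$ and $t_2$ has value $v_2$ in that position, annotated by basic identifiers $p_1$ and $p_2$ respectively. The transaction $T$ consists of $k$ modification queries, all annotated by the same $p$, that alternately modify tuples carrying value $v_1$ into $v_2$ and tuples carrying $v_2$ into $v_1$ (or, to keep tuples distinct and avoid collapse, one can use an auxiliary ``staging'' value and a cyclic $v_1 \to v_2 \to v_1$ pattern). After each modification step, by the definition of Section~\ref{sec:model}, the annotation of the target tuple becomes (the old target annotation) $+_M$ ((the full source annotation) $\cdot_M\, p$). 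I would then prove by induction on the number of queries that the size of the annotation --- measured as the number of leaves, i.e. occurrences of basic annotations from $\{p_1, p_2, p\}$, in the resulting expression --- satisfies a recurrence of the form $s_{i+1} \ge s_i + s_i' + 1$, where $s_i, s_i'$ are the sizes of the two tuple annotations before step $i$; this yields $s_k \ge 2^{\Omega(k)}$.

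\textbf{The main obstacle} is ensuring that the axioms of Figure~\ref{fig:axioms} cannot be used to collapse these growing expressions to a polynomial size, since the Proposition is about the provenance \emph{as produced by the naive construction} (which is the claim that matters for motivating the normal form), but one should be careful that the expression is genuinely large and not merely large-looking. I would address this by arguing that the naive algorithm, as defined, performs no simplification and literally produces the nested expression; the Proposition concerns the output of this procedure, so it suffices to bound the size of the syntactic expression the definitions generate. For full rigor one may additionally verify that the alternation pattern chosen does not trigger any of the collapsing axioms (such as Axiom~\ref{modify-delete_axiom_7_2} or Axiom~\ref{modificationAxiom2_1}, which require a matching ``$-c$'' or specific $+_M$/$\cdot_M$ structure that our pure-modification sequence avoids), so that even the \emph{equivalence class} of the expression has no small representative --- though for the statement as phrased, bounding the naively-generated expression is enough.

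\textbf{A remaining check} is that the alternating modifications keep both tuples in the support throughout (annotations stay nonzero), so that at every step there is an actual source tuple whose annotation gets embedded; this follows since each $+_M$ operation only grows the annotation and no deletion is applied. Tracking the two interleaved recurrences carefully is routine but is where the precise exponential constant is pinned down.
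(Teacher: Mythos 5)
Your proposal is correct and follows essentially the same route as the paper's proof: two tuples in a unary-style relation, a transaction alternating modifications $t_1\rightsquigarrow t_2$ and $t_2\rightsquigarrow t_1$, and an induction on interleaved size recurrences of the form $s_{i+1}\ge s_i+s_i'+O(1)$ yielding a $2^{\Omega(k)}$ bound. Your added remark that it suffices to bound the syntactically generated expression (since the claim concerns the naive construction) is a reasonable clarification that the paper leaves implicit, but it does not change the argument.
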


\begin{proof}(Sketch)
Let $D$ be a relational database with a single unary relation $R$. Let $t_1 = R(a)$ and $t_2 = R(b)$ be the two tuples belonging to $D$.
The transaction is a sequence of two alternating modification queries. The first modifies $t_1$ to $t_2$, denoted $U_{12}$, and the second modifies $t_2$ to $t_1$, denoted by $U_{21}$.
The transaction starts with an update $U_{12}$.
We denote by $P^i(t_j)$, the provenance of $t_j$ after applying    $i$ updates of $T$. By a simple induction, we can prove that
\begin{itemize}
\item $|P^{2\cdot i}(t_2)| =|P^{2\cdot i-1}(t_1)|+3+ |P^{2\cdot i-1}(t_2)|$
\item $|P^{2\cdot i}(t_1)| =|P^{2\cdot i-1}(t_1)|+2 $
\item $|P^{2\cdot i+1}(t_1)| =|P^{2\cdot i}(t_1)|+3+ |P^{2\cdot i}(t_2)|$
\item $|P^{2\cdot i+1}(t_2)| =|P^{2\cdot i}(t_2)|+2 $
\end{itemize}
Therefore, $|P^{2\cdot i}(t_2)|$  is equal to $2\cdot  |P^{2\cdot (i-1)}(t_2)|+8+ |P^{2\cdot (i-1)}(t_1)|$. Thus,  $|P^{2\cdot i}(t_2)|$ is greater than $2^i$.
\end{proof}

Fortunately, we introduce a normal form for our provenance expression which is linear in the database size and the transaction length. Moreover, we prove that this normal form is computable in polynomial time in the  size of the database and the transaction.

\subsection{Normal Form}
\label{sec:normalForm}
For presentation purposes, we represent our provenance expressions
as trees in a classical manner. Figure \ref{fig:provCircuit} depicts
the basic tree representation for each one of the provenance
operations. Any provenance expression obtained by the construction
for the class of ``domain-based" transactions, when applied to an
$X$-database (i.e. a database whose tuple annotations are just
identifiers), can be represented as a composition of the basic
trees.
\begin{figure}
	\centering
	\begin{subfigure}[b]{0.3\linewidth}
		\centering
\begin{tikzpicture}

\Tree [.\node[draw, align=center, inner sep=1pt, text centered]{$+_I$}; $a$ $p$ ]

\end{tikzpicture}
		\caption{$a +_I p$}
		\label{fig:insertCircuit}
	\end{subfigure}
	\begin{subfigure}[b]{0.3\linewidth}
		\centering
		\begin{tikzpicture}

 \Tree [.\node[draw, align=center, inner sep=1pt, text centered]{$-$}; $a$ $p$ ]

\end{tikzpicture}
		\caption{$a - p$}
		\label{fig:deleteCircuit}
	\end{subfigure}
	\begin{subfigure}[b]{0.3\linewidth}
		\centering

		\begin{tikzpicture}

\Tree [.\node[draw, align=center, inner sep=1pt, text centered]{$+_M$}; $a$ [. \node[circle, draw, align=center, inner sep=1pt, text centered]{$\cdot$}; $b$ $p$ ] ]

\end{tikzpicture}
		\caption{$a +_M (b \cdot_{M}p)$}
		\label{fig:modifyCircuit}
	\end{subfigure}
\caption{Tree representation of provenance}
	\label{fig:provCircuit}
\end{figure}
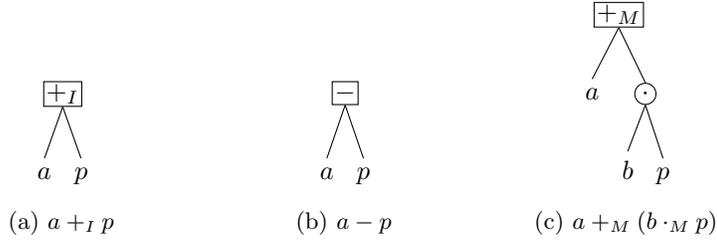

We demonstrate that we can find a normal form of the provenance expression as stated in the following theorem.


\begin{theorem}
\label{lem:minform}
Given a transaction $T^p$, an $X$-database $D$, and $t\in T^p(D)$ with the provenance expression $\phi$. Then, there exists an equivalent provenance expression $\phi' \sim \phi$ such that the tree representation of $\phi'$ has one of the following forms:

    \begin{center}

    		\begin{tikzpicture}

\node at (0,1) {(1)};
\node{$a$};

\begin{scope}[xshift=1cm]
\node at (0, 1) {(2)};
\Tree [.\node[draw, align=center, inner sep=1pt, text centered]{$+_I$}; $a$ $p$ ]
\end{scope}

\begin{scope}[xshift=3cm]
\node at (0, 1) {(3)};
\Tree [.\node[draw, align=center, inner sep=1pt, text centered]{$-$}; $a$ $p$ ]

\end{scope}
\begin{scope}[xshift=5.5cm]
\node at (0, 1) {(4)};
\Tree [.\node[draw, align=center, inner sep=1pt, text centered]{$+_M$}; $a$ [. \node[circle, draw, align=center, inner sep=1pt, text centered]{$\cdot$}; [.\node[draw, align=center, inner sep=1pt, text centered]{$+$}; $b_0$ \quad $b_n$ ] $p$ ] ]
\node at (0.3, -3.1) {$\cdots$};
\end{scope}
\begin{scope}[xshift=10cm]
\node at (0, 1) {(5)};
\Tree [.\node[draw, align=center, inner sep=1pt, text centered]{$+_M$}; [.\node[draw, align=center, inner sep=1pt, text centered]{$-$}; $a$ $p$ ] [. \node[circle, draw, align=center, inner sep=1pt, text centered]{$\cdot$}; [.\node[draw, align=center, inner sep=1pt, text centered]{$+$}; $b_0$ \quad $b_n$ ] $p$ ] ]
\node at (0.4, -3.1) {$\cdots$};
\end{scope}
\end{tikzpicture}

%
%
%
%

    \end{center}

%
Computing $\phi'(t)$ may be performed in polynomial time in the size of $D$ and $T$. Moreover, $\phi'(t)$ can be computed incrementally for each update of the transaction.

\end{theorem}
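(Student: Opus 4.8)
The plan is to prove Theorem~\ref{lem:minform} by induction on the number of update queries in $T^p$, maintaining as an invariant that after processing each prefix of the transaction the provenance of every tuple in the current $UP[X]$-relation is $\equiv_{UP[X]}$-equivalent to an expression of one of the five stated forms, and moreover that the leaves $a,b_0,\dots,b_n$ are always base annotations from $X\cup\{0\}$, so that the summation $b_0+\dots+b_n$ is kept \emph{flat}. The base case is immediate: on the initial $X$-database every tuple is annotated by a single identifier, which is form~(1). The whole argument rests on the fact that all queries of the transaction carry the \emph{same} annotation $p$; this is what lets the newly introduced $p$ match the $p$ already present in forms (2)--(5) and trigger the collapsing axioms.

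For the inductive step I would treat the three query types separately, with a case analysis on the five possible shapes of the incoming provenance. Insertion and deletion are the easy cases: an inserted tuple receives $(\text{old form})+_I p$ and a deleted tuple $(\text{old form})-p$, and a short uniform analysis collapses all five inputs to form~(2) and form~(3) respectively --- deletion uses~\ref{modify-delete_axiom_7_1}, \ref{modify-insert_axiom_10_1} and~\ref{modificationAxiom2_1}, while insertion uses~\ref{modify-insert_axiom_13}, \ref{modify-insert_axiom_11} and~\ref{modify-insert_axiom_9}. The one degenerate subcase, re-inserting an already $+_I p$-annotated tuple, yields $(a+_I p)+_I p\equiv_{UP[X]} a+_I p$; this is forced by Proposition~\ref{prop:equivalece}, since inserting the same tuple twice is set-equivalent to inserting it once.

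The crux is the modification query $R^{M,p}(\tup{u_1},\tup{u_2})$. A source tuple $t_1\vDash\tup{u_1}$ receives $R(t_1)-p$ and is handled exactly as in the deletion case (form~(3)). The hard part is a target tuple $t_2$, whose new annotation is $R(t_2)+_M\big((\sum_{t_1\rightsquigarrow t_2}R(t_1))\cdot_M p\big)$, a combination of several normal-form expressions. Here I would first use the split axiom~\ref{split_axiom} to distribute the disjunctive sum over $\cdot_M p$, and then simplify each source's contribution according to its form: a deleted source vanishes by~\ref{modify-delete_axiom_7_2}; an inserted source pushes its $+_I$ onto the head by~\ref{modify-insert_axiom_10_2}; and a previously modified source of form~(4)/(5) is flattened by the partition axiom~\ref{modificationAxiom2_2}, which merges its inner summation into the single flat sum $B$ (the interchange axiom~\ref{interchange_axion_1} resolving the delicate self-modification subcase where $t_2$ is simultaneously a source). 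Finally the head is recombined with $R(t_2)$: forms (1)/(3) produce forms (4)/(5); an inserted target (form~(2)) \emph{absorbs} the modification entirely via~\ref{modify-insert_axiom_9} and~\ref{modify-insert_axiom_11}, staying in form~(2); and a form~(4)/(5) target is again merged by~\ref{modificationAxiom2_2}. This establishes closure of the five forms under every query.

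For the complexity and incrementality claims I would observe that a normalized annotation consists of a constant-size head together with a flat sum $B=b_0+\dots+b_n$ whose summands are base identifiers, so its size is $O(|D|)$; each query touches only the tuples it updates and needs only a bounded number of axiom applications per touched tuple (plus the flattening merge), so processing one query is polynomial and the whole transaction is polynomial in $|D|$ and $|T|$, computed incrementally. The essential point --- and the reason this avoids the exponential blowup of the naive construction --- is that re-normalizing \emph{after every query} keeps $B$ flat through the partition axiom, rather than letting modifications nest. I expect the modification-target case to be the main obstacle: proving that combining many normal-form sources with the target never escapes forms (4)/(5) and never un-flattens $B$ is exactly where the subtle partition and interchange axioms are indispensable, and where the correct treatment of self-modifications must be pinned down.
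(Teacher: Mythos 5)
Your proposal is correct and follows essentially the same route as the paper: incremental re-normalization after each update, with insertions and deletions collapsing the previous tree to a size-$3$ expression, and modifications handled by distributing the disjunctive sum, eliminating deleted sources, absorbing inserted ones, and flattening nested modifications into a single flat sum. The axiom combinations you invoke correspond one-to-one to the operational rewrite rules of Figure~\ref{fig:rules} that the paper derives from the axioms and then applies; your argument merely inlines those rules as an explicit case analysis on the five normal forms.
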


\begin{proof} (sketch)
The key idea behind the proof is to derive from our axioms a set of operational {\em rules} that manipulate the provenance, shown in Figure \ref{fig:rules}. We may show that the rules are implied by the axioms (but not vice versa), and they guide the generation of a ``normal form".
 Intuitively, in Rule 1 and 2, $a$ is the annotation
associated to the tuple on which the update is applied. Applying an insertion or a deletion overrides the previous updates.
Rules 3 and 8 intuitively state that an update based on an deleted
tuple has no effect and Rule 4 states that an update based on an
inserted tuple is equivalent to inserting the current tuple.  Rules
5, 6 and 7 intuitively allow to ``factorize" successive updates into
a single update.


Then, each update may be handled by applying corresponding rules to the provenance it yields. For instance, for insertion we apply Rule 1 and replace the  provenance  by one of size
    $3$. For deletion, we again obtain size-$3$ expression, this time by applying Rule 2. Modification involves applying the other rules, in a more complex way (details omitted for lack of space). We may show  that after each step, we compute only a linear size formula and that the number of operations performed on this formula
is polynomial in the database and the size of the (prefix of the)
transaction.
\end{proof}




	\begin{figure}
\footnotesize{
	\begin{tabular}{|c |c |}
		\hline
	\begin{tabular}{ c}
Rule 1\\ \begin{tikzpicture}[scale=0.6]
            \Tree [.\node[draw, align=center, inner sep=1pt, text centered]{$+_I$}; \node[draw,dashed,shape border uses incircle,
            isosceles triangle,shape border rotate=90,yshift=-0.1cm, inner sep=1pt] (x){$\tau$}; $p$  ]
            \node (y) at (-0.75,-1.6) {$a$};
            \node (x) at (-0.5,-1.25) {};
            \draw[semithick, -] (x) to  (y);
            \node at (1.5,-0.5) {$\iff$};
            \begin{scope}[xshift=3cm]
            \Tree [.\node[draw, align=center, inner sep=1pt, text centered]{$+_I$}; $a$ $p$  ]
            \end{scope}
            \end{tikzpicture}
        \end{tabular}&
        \begin{tabular}{ c}
            Rule 2 \\ 
            \begin{tikzpicture}[scale=0.6]
            \Tree [.\node[draw, align=center, inner sep=1pt, text centered]{$-$}; \node[draw,dashed,shape border uses incircle,
            isosceles triangle,shape border rotate=90,yshift=-0.1cm, inner sep=1pt] (x){$\tau$}; $p$  ]
            \node (y) at (-0.75,-1.6) {$a$};
            \node (x) at (-0.5,-1.25) {};
            \draw[semithick, -] (x) to  (y);
            \node at (1.5,-0.5) {$\iff$};
            \begin{scope}[xshift=3cm]
            \Tree [.\node[draw, align=center, inner sep=1pt, text centered]{$-$}; $a$ $p$  ]
            \end{scope}

            \end{tikzpicture}
        \end{tabular}
        \\ \hline
        \begin{tabular}{c}
            Rule 3\\ \begin{tikzpicture}[scale=0.6]
            \Tree [.\node[draw, align=center, inner sep=1pt, text centered]{$+_M$}; \node[draw,dashed,shape border uses incircle,
            isosceles triangle,shape border rotate=90,yshift=-0.1cm, inner sep=1pt] (x){$\tau$}; [. \node[circle, draw, align=center, inner sep=1pt, text centered]{$\cdot$}; [.\node[draw, align=center, inner sep=1pt, text centered]{$+$}; [.\node[draw, align=center, inner sep=1pt, text centered]{$-$}; $b_0$ $p$ ] [.\node[draw, align=center, inner sep=1pt, text centered]{$-$}; $b_n$ $p$ ] ] $p$ ]  ]
            \node at (0.4,-3.1) {$\cdots$};
            \node at (2.5,-2) {$\iff$};
            \begin{scope}[xshift=3.5cm, yshift=-1.6cm]
            \node[draw,dashed,shape border uses incircle,
            isosceles triangle,shape border rotate=90,yshift=-0.5cm, inner sep=1pt] (x){$\tau$};
            \end{scope}

            \end{tikzpicture}
        \end{tabular}
        &
        \begin{tabular}{ c}
            Rule 4 \\
            \begin{tikzpicture}[scale=0.6]
            \Tree [.\node[draw, align=center, inner sep=1pt, text centered]{$+_M$}; \node[draw,dashed,shape border uses incircle,
            isosceles triangle,shape border rotate=90,yshift=-0.1cm, inner sep=1pt] (x){$\tau$}; [. \node[circle, draw, align=center, inner sep=1pt, text centered]{$\cdot$}; [.\node[draw, align=center, inner sep=1pt, text centered]{$+$}; [.\node[draw, align=center, inner sep=1pt, text centered]{$+_I$}; $b_0$ $p$ ] \node[draw,dashed,shape border uses incircle,isosceles triangle,shape border rotate=90,yshift=-0.1cm, inner sep=1pt] {$\tau_1$} ;   ] $p$ ]  ]
            \node at (0.4,-3.1) {$\cdots$};
            \node at (2.5,-2) {$\iff$};
            \begin{scope}[xshift=4cm, yshift=-1.6cm]
            \Tree [.\node[draw, align=center, inner sep=1pt, text centered]{$+_I$}; \node[draw,dashed,shape border uses incircle,
            isosceles triangle,shape border rotate=90,yshift=-0.1cm, inner sep=1pt] (x){$\tau$}; $p$  ]
            \end{scope}
            \end{tikzpicture}
        \end{tabular}
    \\ \hline
        \begin{tabular}{ c}
            Rule 5 \\
            \begin{tikzpicture}[scale=0.6]
            \Tree [.\node[draw, align=center, inner sep=1pt, text centered]{$+_M$};
            [.\node[draw, align=center, inner sep=1pt, text centered]{$+_I$}; \node[draw,dashed,shape border uses incircle,
            isosceles triangle,shape border rotate=90,yshift=-0.1cm, inner sep=1pt] (x){$\tau_1$}; $p$ ]
            [.\node[circle, draw, align=center, inner sep=1pt, text centered]{$\cdot$};
            \node[draw,dashed,shape border uses incircle,
            isosceles triangle,shape border rotate=90,yshift=-0.1cm, inner sep=1pt] (x){$\tau_2$};  $p$ ] ] \node at (1.5,-1.2) {$\iff$};
            \begin{scope}[xshift=2.8cm, yshift=-0.5cm]
            \Tree [.\node[draw, align=center, inner sep=1pt, text centered]{$+_I$}; \node[draw,dashed,shape border uses incircle,
            isosceles triangle,shape border rotate=90,yshift=-0.1cm, inner sep=1pt] (x){$\tau_1$}; $p$  ]
            \end{scope}

            \end{tikzpicture}
        \end{tabular}
        &
        \begin{tabular}{ c}

            Rule 6 \\
            \begin{tikzpicture}[scale=0.6]
                \Tree [.\node[draw, align=center, inner sep=1pt, text centered]{$+_M$};
            [.\node[draw, align=center, inner sep=1pt, text centered]{$+_M$}; \node[draw,dashed,shape border uses incircle,
            isosceles triangle,shape border rotate=90,yshift=-0.1cm, inner sep=1pt] (x){$\tau_1$}; [.\node[circle, draw, align=center, inner sep=1pt, text centered]{$\cdot$};
            \node[draw,dashed,shape border uses incircle,
            isosceles triangle,shape border rotate=90,yshift=-0.1cm, inner sep=1pt] (x){$\tau_2$};  $p$ ] ]
            [.\node[circle, draw, align=center, inner sep=1pt, text centered]{$\cdot$};
            [.\node[draw, align=center, inner sep=1pt, text centered]{$+$}; \node[draw,dashed,shape border uses incircle,
            isosceles triangle,shape border rotate=90,yshift=-0.1cm, inner sep=1pt] (y){$\tau_3$};  ]  $p$ ] ]

            \node at (2,-1.2) {$\iff$};

            \begin{scope}[xshift=4cm]
            \Tree [.\node[draw, align=center, inner sep=1pt, text centered]{$+_M$}; \node[draw,dashed,shape border uses incircle,
            isosceles triangle,shape border rotate=90,yshift=-0.1cm, inner sep=1pt] (x){$\tau_1$}; [.\node[circle, draw, align=center, inner sep=1pt, text centered]{$\cdot$};
            [.\node[draw, align=center, inner sep=1pt, text centered]{$+$};  \node[draw,dashed,shape border uses incircle,
            isosceles triangle,shape border rotate=90,yshift=-0.1cm, inner sep=1pt] (x){$\tau_2$};  \node[draw,dashed,shape border uses incircle,
            isosceles triangle,shape border rotate=90,yshift=-0.1cm, inner sep=1pt]
             (x){$\tau_3$};] $p$ ]  ]
            \end{scope}

            \end{tikzpicture}
        \end{tabular}
        
        \\ \hline
        
        \begin{tabular}{ c}
            \\
            Rule 7 \\
            \begin{tikzpicture}[scale=0.7]
            \Tree [.\node[draw, align=center, inner sep=1pt, text centered]{$+_M$}; \node[draw,dashed,shape border uses incircle,
            isosceles triangle,shape border rotate=90,yshift=-0.1cm, inner sep=1pt] (x){$\tau_1$}; [.\node[circle, draw, align=center, inner sep=1pt, text centered]{$\cdot$};
            [.\node[draw, align=center, inner sep=1pt, text centered]{$+$};
            [.\node[draw, align=center, inner sep=1pt, text centered]{$+_M$}; \node[draw,dashed,shape border uses incircle,
            isosceles triangle,shape border rotate=90,yshift=-0.1cm, inner sep=1pt] (x){$\tau_2$}; [.\node[circle, draw, align=center, inner sep=1pt, text centered]{$\cdot$}; \node[draw,dashed,shape border uses incircle,
            isosceles triangle,shape border rotate=90,yshift=-0.1cm, inner sep=1pt] (x){$\tau_3$}; $p$ ]] \node[draw,dashed,shape border uses incircle,
            isosceles triangle,shape border rotate=90,yshift=-0.1cm, inner sep=1pt] (x){$\tau_4$}; ] $p$ ]]

            \node at (2.8,-2) {$\iff$};

            \begin{scope}[xshift=5cm, yshift=-0.5cm]
            \Tree [.\node[draw, align=center, inner sep=1pt, text centered]{$+_M$}; \node[draw,dashed,shape border uses incircle,
            isosceles triangle,shape border rotate=90,yshift=-0.1cm, inner sep=1pt] (x){$\tau_1$}; [.\node[circle, draw, align=center, inner sep=1pt, text centered]{$\cdot$};
            [.\node[draw, align=center, inner sep=1pt, text centered]{$+$};  \node[draw,dashed,shape border uses incircle,
            isosceles triangle,shape border rotate=90,yshift=-0.1cm, inner sep=1pt] (x){$\tau_2$};  \node[draw,dashed,shape border uses incircle,
            isosceles triangle,shape border rotate=90,yshift=-0.1cm, inner sep=1pt] (x){$\tau_3$}; \node[draw,dashed,shape border uses incircle,
            isosceles triangle,shape border rotate=90,yshift=-0.1cm, inner sep=1pt] (x){$\tau_4$}; ] $p$ ]  ]
            \end{scope}
            \end{tikzpicture}

        \end{tabular}
        &
        \begin{tabular}{ c}
            Rule 8 \\
            \begin{tikzpicture}[scale=0.7]
            \Tree [.\node[draw, align=center, inner sep=1pt, text centered]{$+_M$}; \node[draw,dashed,shape border uses incircle,
            isosceles triangle,shape border rotate=90,yshift=-0.1cm, inner sep=1pt] (x){$\tau$}; [. \node[circle, draw, align=center, inner sep=1pt, text centered]{$\cdot$}; [.\node[draw, align=center, inner sep=1pt, text centered]{$+$}; \node[draw,dashed,shape border uses incircle,
            isosceles triangle,shape border rotate=90,yshift=-0.1cm, inner sep=1pt] (y){$\tau_1$}; [.\node[draw, align=center, inner sep=1pt, text centered]{$-$}; $b_0$ $p$ ] ]  $p$ ]  ]
            \node at (0.4,-3.1) {$\cdots$};
            \node at (2.3,-2) {$\iff$};

            \begin{scope}[xshift=3.8cm, yshift=-0.5cm]
            \Tree [.\node[draw, align=center, inner sep=1pt, text centered]{$+_M$}; \node[draw,dashed,shape border uses incircle,
            isosceles triangle,shape border rotate=90,yshift=-0.1cm, inner sep=1pt] (x){$\tau$}; [. \node[circle, draw, align=center, inner sep=1pt, text centered]{$\cdot$}; [.\node[draw, align=center, inner sep=1pt, text centered]{$+$}; \node[draw,dashed,shape border uses incircle,
            isosceles triangle,shape border rotate=90,yshift=-0.1cm, inner sep=1pt] (y){$\tau_1$};  ]  $p$ ]  ]
            \end{scope}

            \end{tikzpicture}
        \end{tabular}\\
    \hline
    \end{tabular}
}
 \caption{Rules for computing the normal form}

    \label{fig:rules}

	\end{figure}
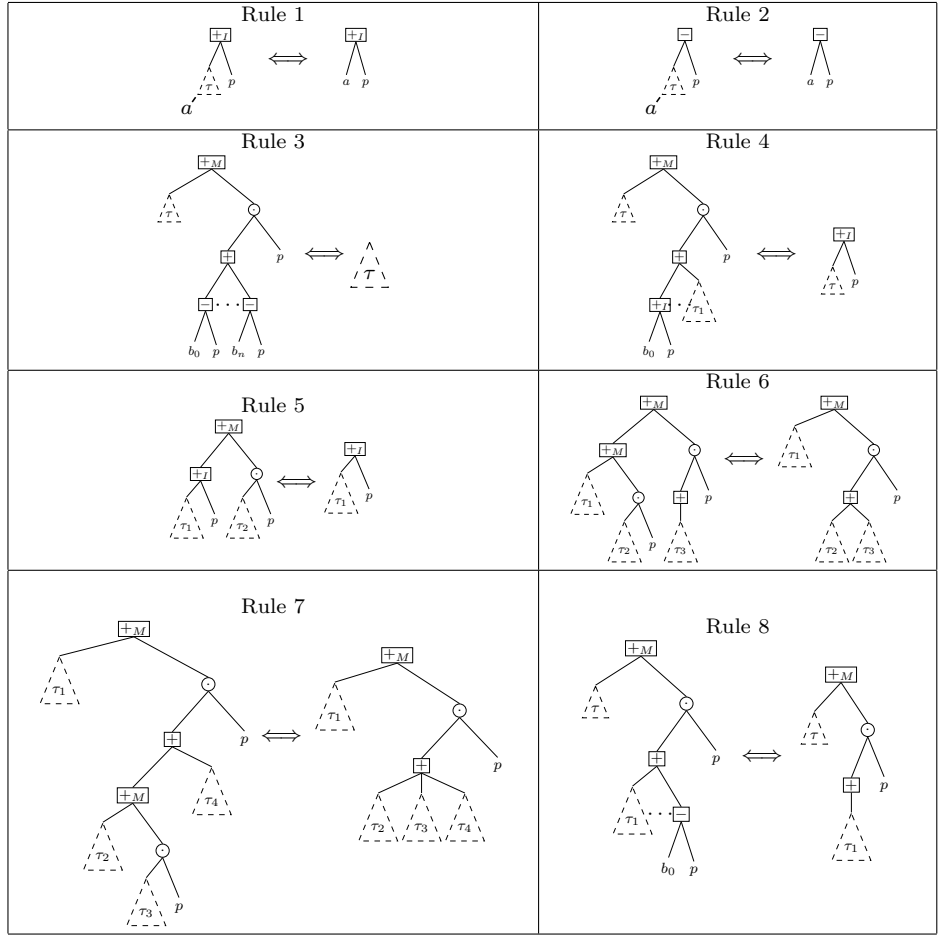

This normal form is still not guaranteed to be minimal, since there is a subtlety pertaining to the possible existence of $0$ in the formula. This may be remedied in post-processing: 
\begin{proposition}
    Let $T^p$ be an annotated transaction, applied to an $X$-database $D$. Let $\phi(t)$ be the normal form provenance expression of a tuple $t$ after applying the transaction $T^{p}$ to $D$.
    Let $\phi'(t)$ be the expression obtained by using the axioms related to $0$ to $\phi(t)$ to minimize it. Then $\phi'(t)$ is unique and a minimized formula.
\end{proposition}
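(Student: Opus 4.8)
The plan is to treat the zero-related axioms of Section~\ref{sec:model} as a one-directional term-rewriting system, orienting each equation so that the side containing a syntactic $0$ rewrites to the simpler side: $0-a \to 0$, $0+_M a \to a$, $0+_I a \to a$, $a+_I 0\to a$, $a+_M 0 \to a$, $a-0 \to a$, $a\cdot_M 0 \to 0$ and $0\cdot_M a \to 0$. Every such rule strictly decreases the number of nodes in the tree representation, so the rewriting relation is terminating and a normal form of $\phi(t)$ is reached after finitely many steps; this is exactly the $\phi'(t)$ of the statement, and since each rule is an instance of an axiom of the structure we immediately have $\phi'(t)\equiv_{UP[X]}\phi(t)$.

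For uniqueness I would establish confluence and invoke Newman's lemma (termination plus local confluence implies confluence, hence a unique normal form). Local confluence reduces to checking the critical pairs, which arise only when both operands of an operator are $0$: for $0+_M 0$, $0+_I 0$, $0-0$ and $0\cdot_M 0$ the two applicable rules yield the same result in each case, so every critical pair is trivially joinable. Moreover, starting from a normal form of Theorem~\ref{lem:minform} the situation is far more constrained: the leaves $b_0,\dots,b_n$ and the annotation $p$ are, by construction over an $X$-database, genuine identifiers (annotations of tuples in the support of $D$, resp.\ the transaction annotation) and hence never $0$, so the only leaf that may equal $0$ is the top-level $a$. A direct case analysis over the five shapes then shows the reduction is in fact deterministic, which already yields uniqueness of $\phi'(t)$ independently of the general confluence argument.

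It remains to argue that $\phi'(t)$ is minimal, i.e.\ that no strictly smaller formula is $\equiv_{UP[X]}$-equivalent to it. After exhaustive $0$-elimination, $\phi'(t)$ is either the constant $0$ or a $0$-free expression in one of the reduced shapes ($a$, $p$, $a+_I p$, $a-p$, $(b_0+\cdots+b_n)\cdot_M p$, or these last wrapped as $a+_M(\cdot)$ or $(a-p)+_M(\cdot)$). The key step is to show that every leaf and every connective occurring in such a $0$-free form is semantically essential. I would do this using Proposition~\ref{prop:commutes_with_homo}: for each identifier occurring in $\phi'(t)$ I exhibit a specialization, i.e.\ a homomorphism into a concrete Update-Structure such as the Boolean deletion-propagation structure, under which toggling the value assigned to that identifier changes the value of $\phi'(t)$; since provenance commutes with the homomorphism, that identifier must then occur in any equivalent formula as well. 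Showing in addition that the surviving $b_i$ are exactly the distinct source identifiers (so they cannot be merged under the idempotent $+$ nor dropped) and that each outer operator is detected by an appropriate specialization, a count of the forced leaves and connectives gives that no equivalent formula has fewer nodes than $\phi'(t)$.

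The main obstacle is this last minimality step. Termination and uniqueness are routine once the rules are oriented, but genuine size-minimality requires ruling out rewrites that trade one operator for another without changing the semantics. The delicate point is that the plain disjunction $+$ carries no zero-law and is, in the concrete specializations, idempotent, so I must argue carefully that the surviving $b_i$ cannot collapse and that the wrapping structure ($-p$, $+_M$, $+_I$) recorded by the normal form is forced by the equivalence class rather than by the particular execution; doing this uniformly across all five shapes, rather than shape-by-shape, is where the real work lies.
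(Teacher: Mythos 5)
Your strategy is sound, and on the uniqueness half it is essentially a more formal rendering of what the paper does: the paper's own proof is a short case analysis observing that applying the zero-related axioms to a normal-form expression can only yield one of three shapes --- a normal-form expression, the constant $0$, or a formula $\Sigma_i\, b_i \cdot_M p$ --- and then asserting that these are pairwise non-equivalent and admit no more concise representation. Your orientation of the zero axioms into a size-decreasing rewrite system, the critical-pair check, and the observation that in a normal form only the top-level $a$ can be $0$ (so the reduction is deterministic) give a cleaner account of why the post-processed expression is well defined; that part is, if anything, more rigorous than the published sketch. The divergence is on minimality. The paper merely asserts it; you propose to derive it by exhibiting, for each identifier, a specialization into a concrete Update-Structure under which toggling that identifier changes the value, so that every equivalent formula must mention it. This is the right kind of argument (what licenses it is that the target structure satisfies the axioms, so provably equivalent formulas evaluate equally --- Proposition~\ref{prop:commutes_with_homo} itself is about propagation rather than equivalence), but the concluding count does not close as stated: $k$ essential \emph{distinct} identifiers force only $2k-1$ nodes, whereas shapes (4) and (5) of Theorem~\ref{lem:minform} contain two occurrences of $p$ and therefore exceed that bound, so an equivalent formula of intermediate size is not excluded by counting alone. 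You would additionally have to show that any equivalent formula must repeat $p$ (or otherwise carry the extra connective), which is exactly the ``wrapping structure is forced'' point you flag as the real work. Since the paper offers no argument for that step either, your proposal is at least as complete as the paper's, but the minimality claim should be regarded as still open in both.
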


\begin{proof} (sketch)
    We observe that by applying the ``$0$ axioms" (from Section \ref{sec:axioms}) to a normal form formula, we may obtain either (1) a normal form expression,  or (2) $0$ or (3) a formula of the form $\Sigma_i (b_i) \cdot_M p$.
    We can show that none of these expressions is equivalent to any other, and there is
no further concise way of representing neither of them.
\end{proof}

\begin{example}
Consider again the transaction $T_1$ from Figure \ref{fig:trans1} (let $U^p_{1}, U^p_{2}$ denote its first and second query respectively), and the database depicted in Figure \ref{fig:initTable}.
This transaction deals with three tuples:
$t_1 =Products^{M}(\text{``Kids mnt bike", ``Sport"}, \$120)$ with the annotation $p_1$,
$t_2 =Products^{M}(\text{``Kids mnt bike", ``Kids"}, \$120)$ annotated by $p_3$, and 
$t_3 =Products^{M}(\text{``Kids mnt bike", ``Bicycles"}, \$120)$ annotated by $0$.
Normal form is maintained incrementally, in the sense that after each update operation, we examine the provenance expressions of all tuples and, if a particular expression is not in normal form, transform it into one using the rules.    
In our example, after the first update, the provenance of all tuples is already in normal form : $U_1^p(D)(t_3)= p_3 - p$ and $U_1^p(D)(t_1)= p_1 +_M (p_3 \cdot_M p)$.
After the second update, the provenance expressions a of $t_1$ and $t_3$ are no longer in normal form. $T_1^p(D)(t_1)= (p_1 +_M (p_3 \cdot_M p ))- p$ is simplified by using Rule 2, to $p_1 - p$. By using Rule 7,  $T^p(D)(t_3)= 
0 +_M ((p_1 +_M (p_3 \cdot_M p)) \cdot_M p)$
 may be simplified to 
$0 +_M ((p_1+p_3) \cdot_M p)$. 
Further updates, if exist, would apply to these normal forms; if needed their resulting provenance is again transformed to normal forms etc. In this case we have concluded the updates; a post-processing step using the $0$ axioms is applied to the provenance of $T^p(D)(t_3)$ to obtain $(p_1+p_3) \cdot_M p$.
\end{example}

\section{Experimental Evaluation}
\label{sec:exp}


We have conducted experiments whose main goals were examining (1)
the scalability of the approach with respect to the number of updates in terms of time and memory
overhead, (2) the usefulness of the resulting provenance, assessed by measuring the time it takes to assign values to provenance annotations occurring in the expression, (3) the effectiveness
of our provenance normal form representation which in turn is based
on our provenance equivalence axiomatization, and (4) comparison with the previously proposed model of \cite{glavic}.


We used Python 3 to implement our provenance framework for an in-memory database.
This is a simple proof-of-concept, with no indices, thus each update requires a full scan of the database.
We use a hashmap between tuples and their annotations,
allowing random access to the annotation given the tuple.
The experiments were executed on Windows 10, 64-bit, with
8GB of RAM and Intel Core i7-4600U 2.10 GHz processor. Each experiment was executed 5 times and we report the average result. 

%
%
%
%
%
%
%
%

\subsection{Setup: Benchmarks and baselines}
\label{sec:benchmarks}

We have examined our solutions using two benchmarks: TPC-C \cite{tpcc} is an on-line transaction processing benchmark, including 
update-intensive
transactions, that simulate the activity of complex on-line
transaction processing application environments. Its underlying database
consists of nine tables and is populated with initial data of about
$2.1$M tuples. For our experiments, we used the Python
open source implementation of the benchmark from \cite{tpccimpl} to
generate transactions logs with up to $1966$ update queries, and executed the log using our in-memory database implementation with provenance support. Additionally, we have generated a simple synthetic dataset
populated with $1$M tuples, with randomly generated
values from a fixed domain using a uniform distribution. We
generated sequences of update queries of varying length. The type of
query (insertion, deletion or update) was randomly selected with
uniform distribution; the query parameters (e.g., which
tuples are modified and how) were selected at random from a fixed
domain; deletion and modification queries perform a selection
over a numeric column. 
\vspace{-5mm}
\paragraph*{Compared Algorithms and Baselines} In all experiments we have measured the performance of both of our constructions: (1) the naive approach of Section \ref{sec:naive} that simply generates provenance according to its definition in Section \ref{sec:algebStruct}, and makes no use of neither the normal form nor axioms (labeled ``No axioms" in the graphs); and (2) the more efficient provenance generation method of Section \ref{sec:normalForm} based on the normal form (labeled ``Normal form"). Two baselines that we have compared to are (1) ``No provenance", i.e., vanilla evaluation of the transactions without provenance support, and (2) in dedicated experiments, the provenance model of MV-semirings \cite{glavic} discussed above.  

As explained above, we have also measured the time it takes to {\em use} provenance, for the applications in Section \ref{sec:app}. As is the case with semiring provenance \cite{GKT-pods07}, using provenance for any of these (or similar) applications amounts to mapping the abstract annotations to values (the soundness of which relies on Proposition \ref{prop:commutes_with_homo}), and performing computation in the resulting structure (e.g., deletion propagation, access control, certification). We show graphs for the representative application of deletion propagation, since for this application there is also a baseline alternative that does not use provenance: applying the deletion directly to the input database, and then running the ``vanilla" transaction (this baseline is again labeled ``No provenance" in the relevant graphs). 

\subsection{Overhead and Usage}
\label{sec:overheadAndUse}

Figures \ref{fig:tpcc} and \ref{fig:synth} show the time and memory overhead of provenance generation as well as the time it takes to use provenance, for both TPC-C and our synthetic datasets resp.  For the latter we have set the number of affected tuples to be 200 ($0.02\%$
of the database tuples), which is
consistent with the observed percentage in TPC-C. Below (Section \ref{sec:affTuples}) we present results obtained when varying this percentage.

\begin{figure*}
	\centering
	\begin{subfigure}[t]{0.33\textwidth}
		\includegraphics[width = \linewidth]{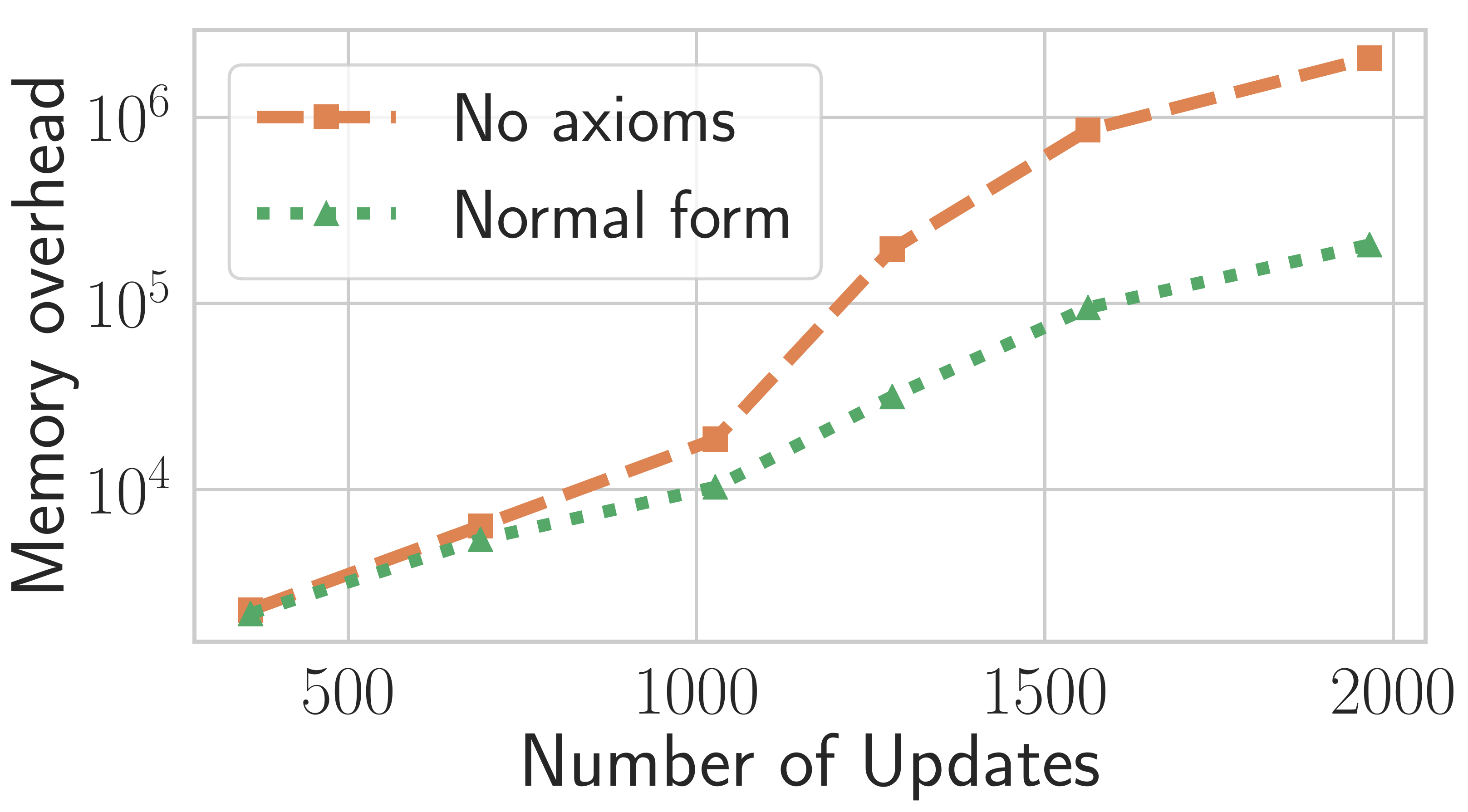}
		\caption{Memory overhead}
		\label{fig:db_size}
	\end{subfigure}%
	\begin{subfigure}[t]{0.33\textwidth}
		\includegraphics[width = \linewidth]{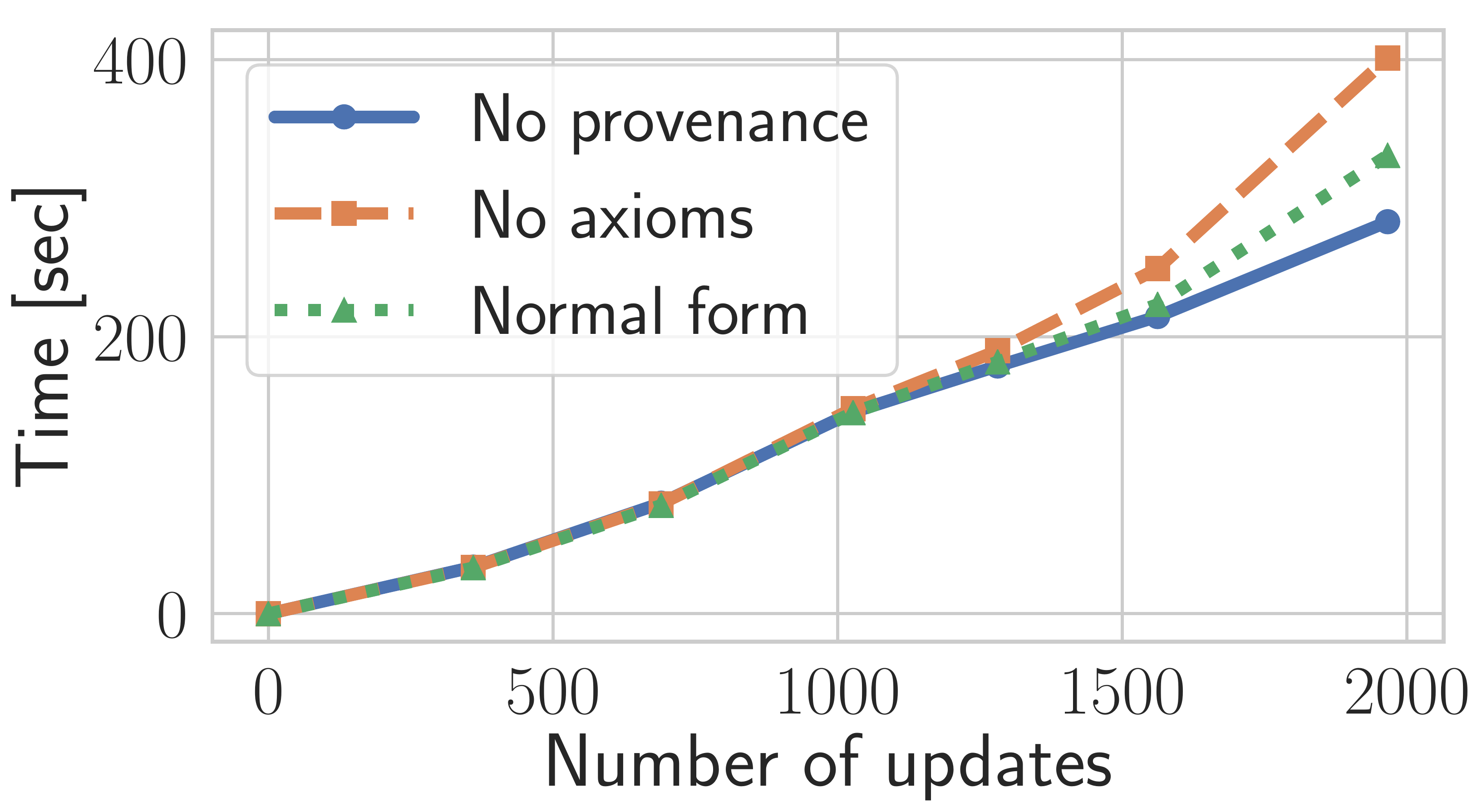}
		\caption{Runtime}
		\label{fig:runtime}
	\end{subfigure}%
	\begin{subfigure}[t]{0.33\textwidth}
		\includegraphics[width = \linewidth]{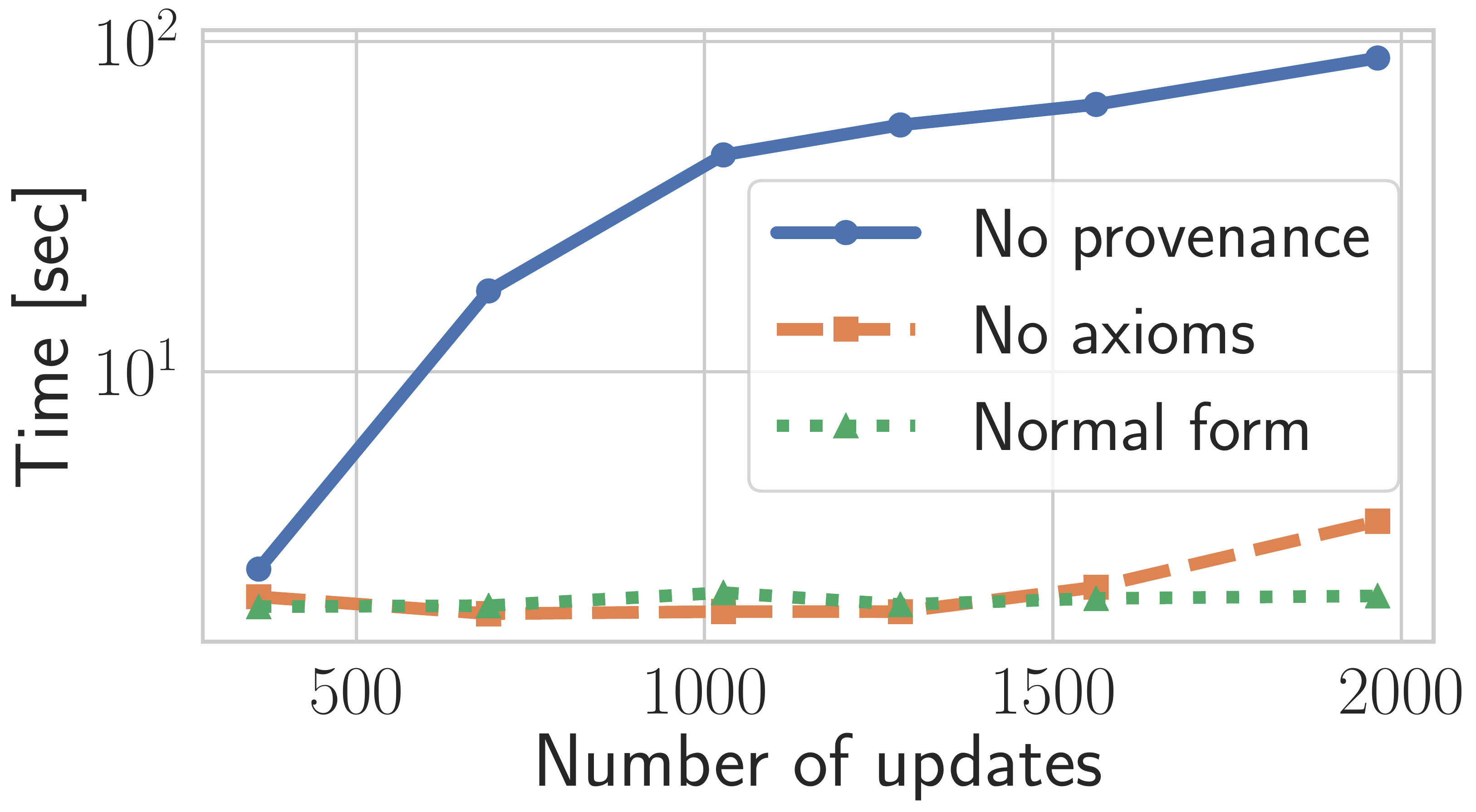}
		\caption{Usage time for deletion propagation}
		\label{fig:assignment}
	\end{subfigure}
	\caption{Provenance overhead and usage (TPC-C dataset)} \label{fig:tpcc}

\end{figure*}

\begin{figure*}

	\centering
	\begin{subfigure}[t]{0.33\textwidth}
		\includegraphics[width = \linewidth]{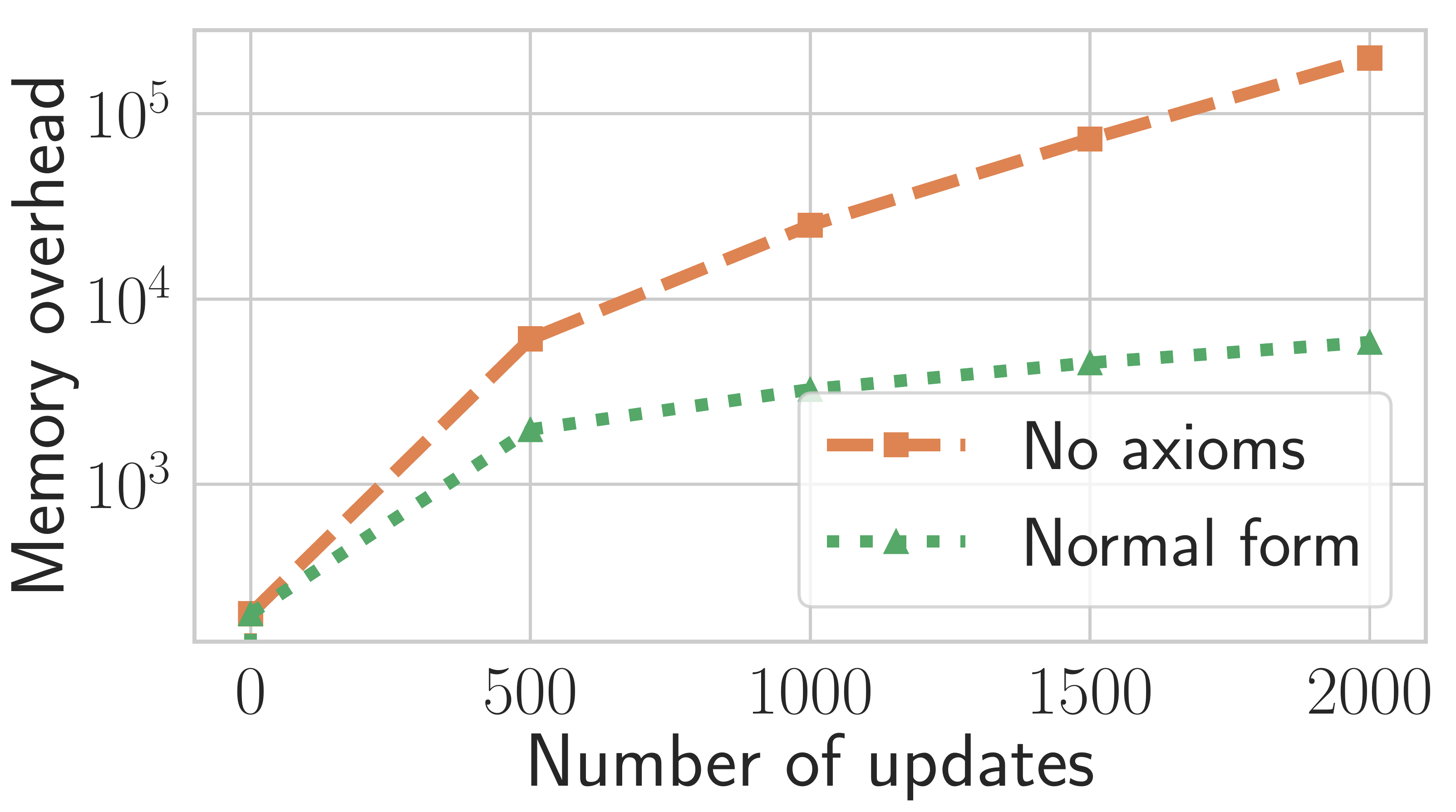}
		\caption{Memory overhead}
		\label{fig:mem_syn}
	\end{subfigure}%
	\begin{subfigure}[t]{0.33\textwidth}
		\includegraphics[width = \linewidth]{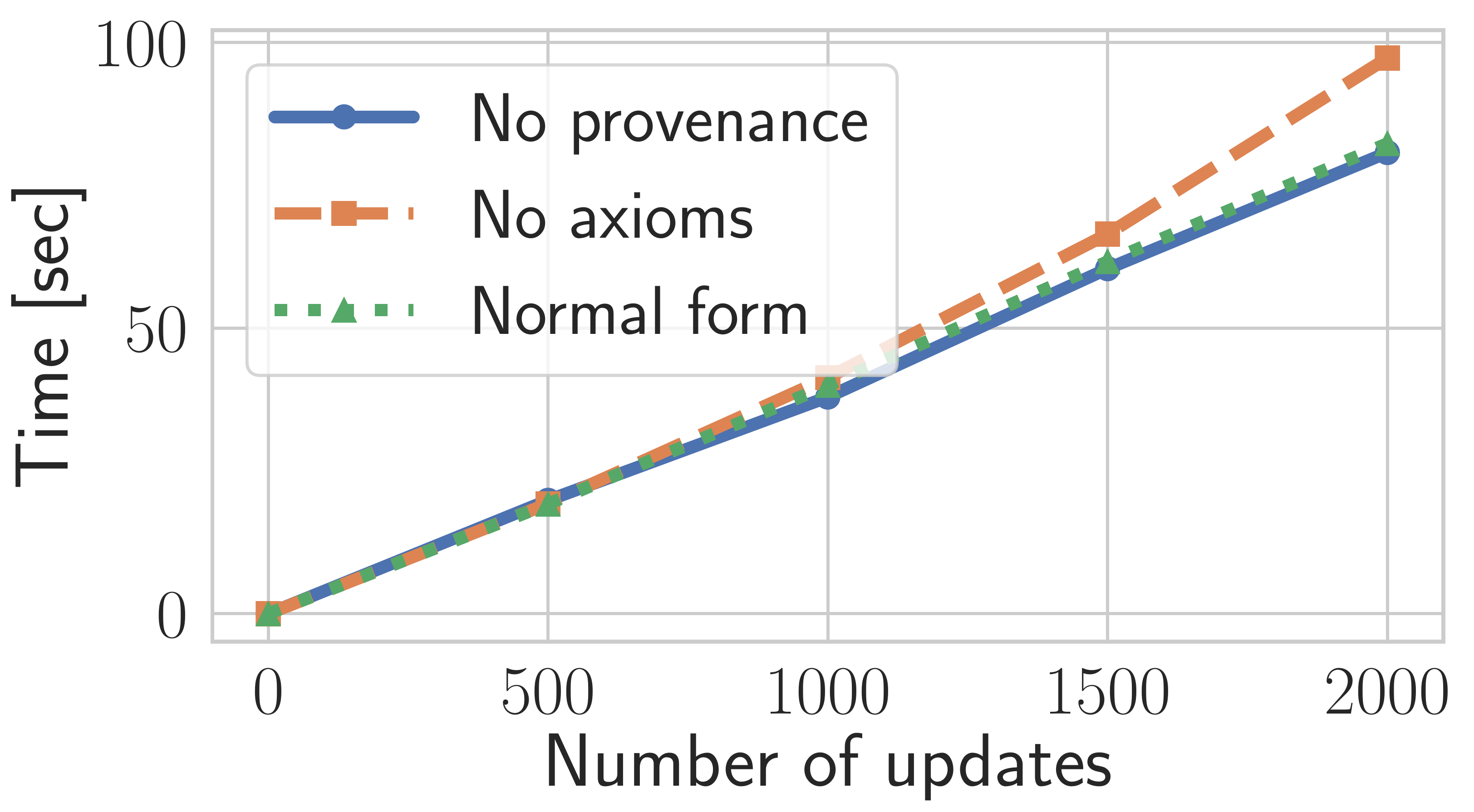}
		\caption{Runtime}
		\label{fig:runtime_syn}
	\end{subfigure}%
	\begin{subfigure}[t]{0.33\textwidth}
		\includegraphics[width = \linewidth]{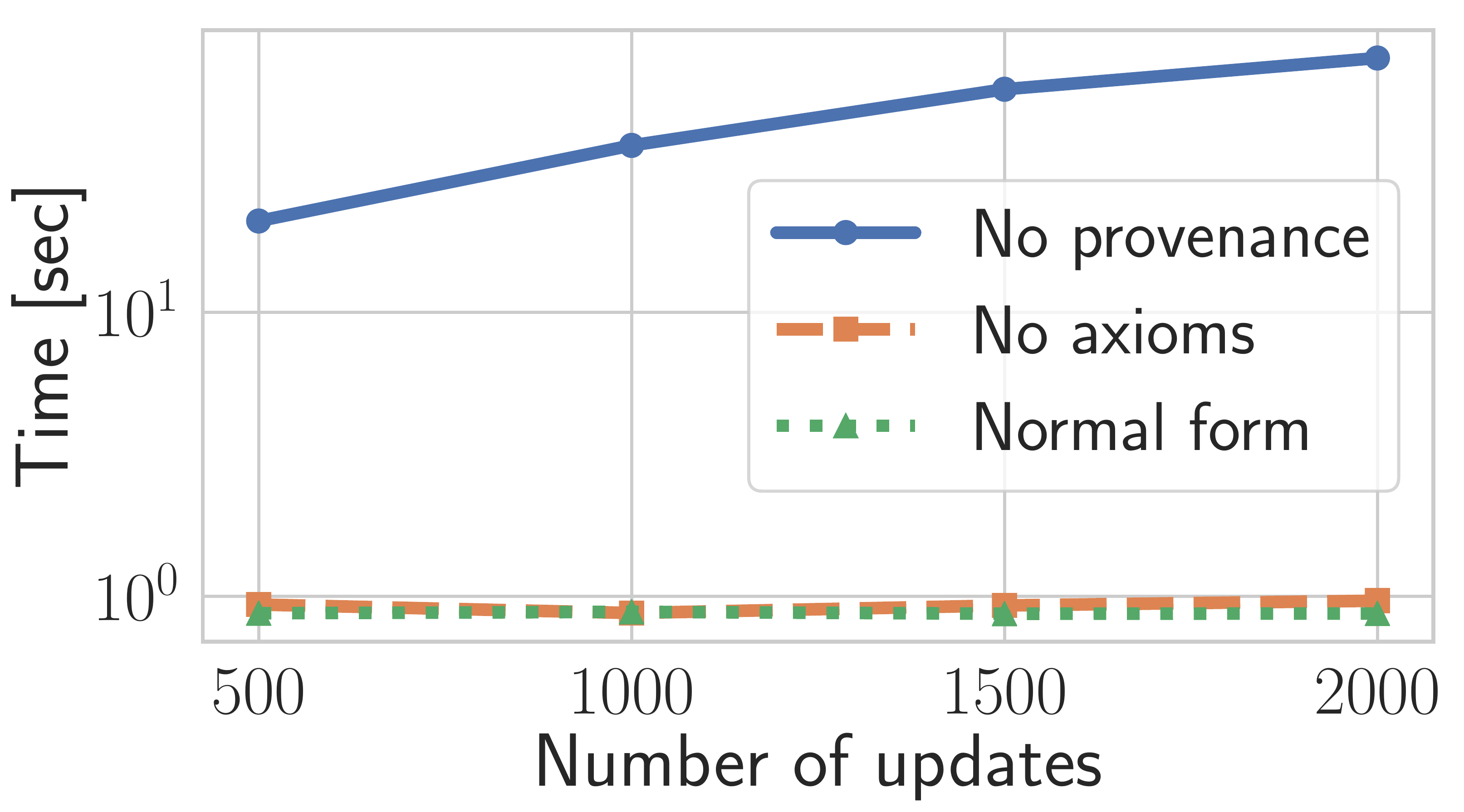}
		\caption{Usage time for deletion propagation}
		\label{fig:assignment_syn}
	\end{subfigure}

	\caption{Provenance overhead and usage (synthetic dataset)} \label{fig:synth}

\end{figure*}

\paragraph{Memory overhead} Provenance tracking leads to memory overhead of two flavors. First, recall that deleted and modified tuples
are in fact not removed from the database in our construction
(intuitively so that the operation may be ``undone"). Therefore, the
database size continuously grows. Second, maintaining the provenance
expressions incurs an overhead. 
 Figures
\ref{fig:db_size} and \ref{fig:mem_syn} show the memory overhead incurred by our construction with and without the normal form, 
compared to executing the transactions with no provenance tracking,
as a function of the number of updates. 

We note that the choice of provenance representation does
not affect the number of tuples in the database: provenance tracking
with or without the normal form representation leads to the same
number of tuples. The overhead in the database size was about
$2\%$ compared to no provenance tracking for both. In contrast to the 
 database size, there is a significant difference in the
provenance size: for the largest number of updates, the provenance
size using the naive approach (i.e., no application of axioms) was
4,127,127, while using the normal form representation the size of
the provenance was only 2,264,798, a difference of over $82\%$.

Using the synthetic dataset with $1$M tuples, we observed an overhead of about $100\%$ (i.e., $\times2$)
using the normal form representation, while the overhead without applying the
axioms was  $120\%$ with respect to no
provenance tracking.

\paragraph{Running time} Figure \ref{fig:runtime} depicts the running time of the transaction for the TPC-C dataset.
Although provenance tracking and maintenance incur overhead in both
the database size and additional memory for the provenance
information, the overhead is reasonable: the running time without
provenance tracking was $283$ seconds for the largest number of
updates, and $401$ and $330$ seconds for the provenance tracking
without using the axioms and with the normal form representation
respectively. 
When the number of updates per
tuple is small the overhead of maintaining the provenance is
negligible compared to no provenance evaluation, moreover, there is
no overhead of processing the axioms. As this number increases (after around 1K updates), the
provenance overhead increases, and the affect of the axioms is more
noticeable. Yet, the overhead of processing the rules compared to no
provenance tracking increases as well. 

Interestingly, even though using the normal form
requires the application of rules for minimization (see Section
\ref{sec:provmin}), the running time of the construction with normal
form representation is lower than that of the naive approach. This
is because the minimization is done incrementally after each update,
and as a result the maintained provenance size is significantly
smaller than the provenance expression obtained without using the
axioms. Note that generating new provenance expression for new or
updated tuples uses the existing tuples provenance and requires
copying it. Thus large provenance expressions lead also to overhead
in the tracking time, which underlines another useful aspect of the
normal form.

We observed similar trends for the synthetic dataset as shown in
Figure \ref{fig:runtime_syn}. The computation time of the
transaction with no provenance tracking was about $77$ seconds;
provenance tracking without using the axioms incurred an overhead of
over $25\%$ (about $97$ seconds), whereas using the normal form
representation, the running time overhead was less than $3\%$ (only
$79$ seconds).

\vspace{-0.3cm}

\paragraph{Provenance Usage}  As explained above, we have examined the time it takes to use provenance for deletion propagation (with and without the normal form), compared to a baseline that re-computes the transaction result for the deletion scenario. The results are reported in Figures
\ref{fig:assignment} and \ref{fig:assignment_syn}.


For the two datasets and for both variants of provenance tracking,
using the provenance framework significantly outperforms the
baseline approach. For the largest number of updates in the TPC-C
dataset (Figure \ref{fig:assignment}), re-running the transaction
over the modified database took $89$ seconds, while the provenance
assignment time was $3.43$ seconds (over $\times 25$ faster) for the naive construction and $1.94$ seconds using the
normal form representation (over $\times 45$ faster that the
baseline). The gain of using the normal form representation compared
with the naive construction was significant: about $78\%$. For the
synthetic dataset (Figure \ref{fig:assignment_syn}), the
re-computation time was $78$ seconds, the assignment time for the
provenance generated without using the axioms was $0.96$ seconds,
and for the normal form it was $0.86$ seconds. These are over
$\times 81$ and $\times 91$ faster than the baseline, respectively.

\subsection{When do we gain from the Normal Form?}\label{sec:affTuples}
The next set of experiments aims at assessing the usefulness of the
normal form representation in synthetic environment where we change
the provenance size.  As the number of update per tuple increases,
the difference between the sizes of the provenance generated without
using the axioms and of the provenance represented in the normal
form, increases. For a fixed number of updates, as the number of the overall
affected tuples increases, the number of update per tuple decreases
(since the updated tuples are selected with uniform distribution).
Thus, we fixed the transaction length and examined the effect of the
number of tuples affected by the transaction on the overhead
incurred by provenance tracking with and without the normal form. We
varied the number of affected tuples from $200$ to $1000$ ($0.1\%$
of the database size). This is in line with the number of affected
tuples in the TPC-C dataset that varies from from $200$ to $2000$
($0.1\%$ of the database size there). The results for $1$M tuples
and $2000$ update queries are shown in Figure
\ref{fig:query_part}.

The right-hand side of Figure \ref{fig:query_part} depicts the memory overhead of
provenance tracking as a function of the overall number of  tuples affected
by the transaction. Recall that the axioms allow us to compactly
represent the provenance expression of a single tuple at a time.
Thus, for large number of updates per tuple, we expect to see a
significant difference between the two approaches. Indeed, for small
numbers of affected tuples, the provenance size of each tuple is
larger. Then, the effect of the axioms on the provenance size is
notable, reflecting on the memory overhead. We note that there is a
moderate growth in the memory overhead when using the axioms as the
number of affected tuples increases.


\begin{figure}
	\centering
	\begin{subfigure}[b]{\textwidth}
		\includegraphics[width = 0.45\linewidth]{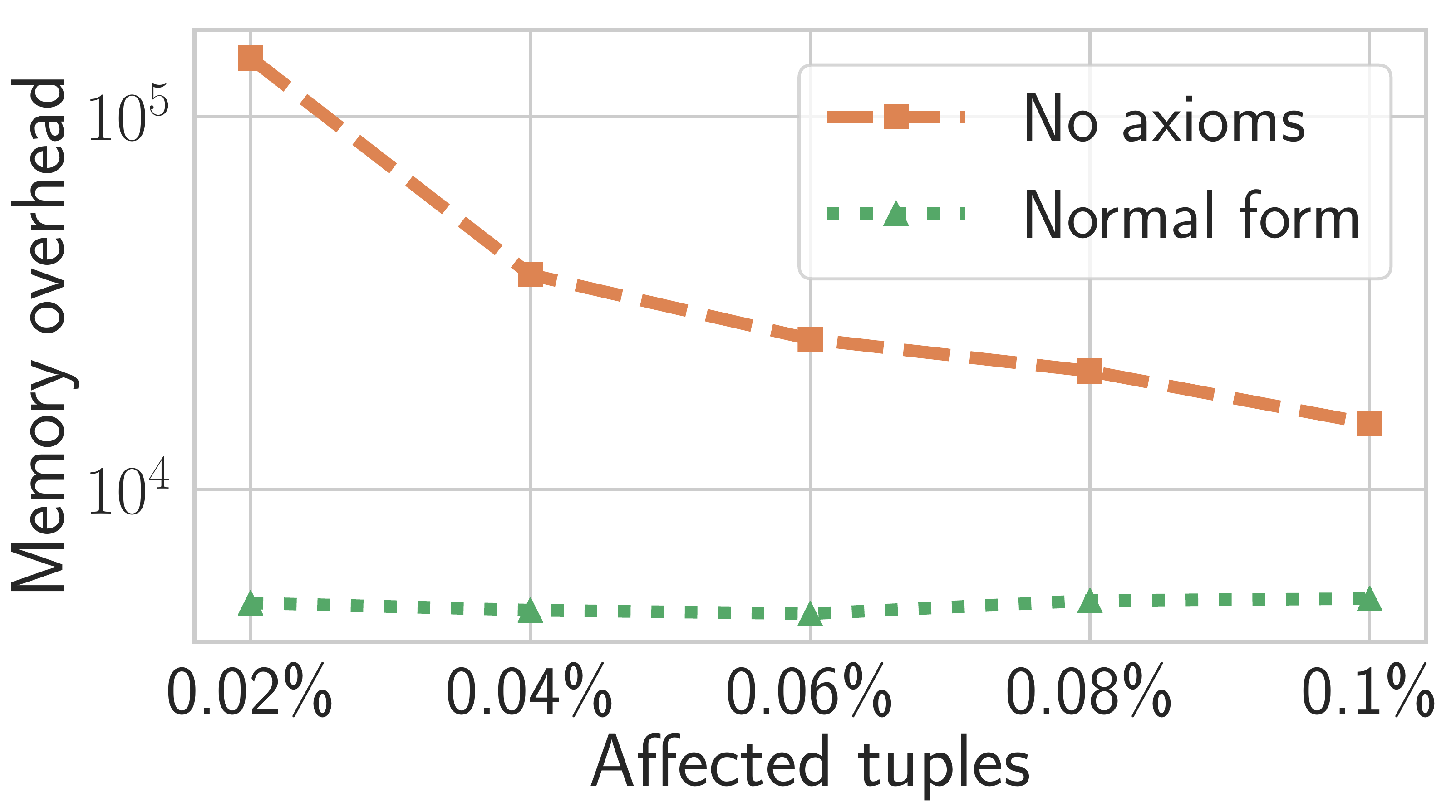}
		~
		\includegraphics[width = 0.47\linewidth]{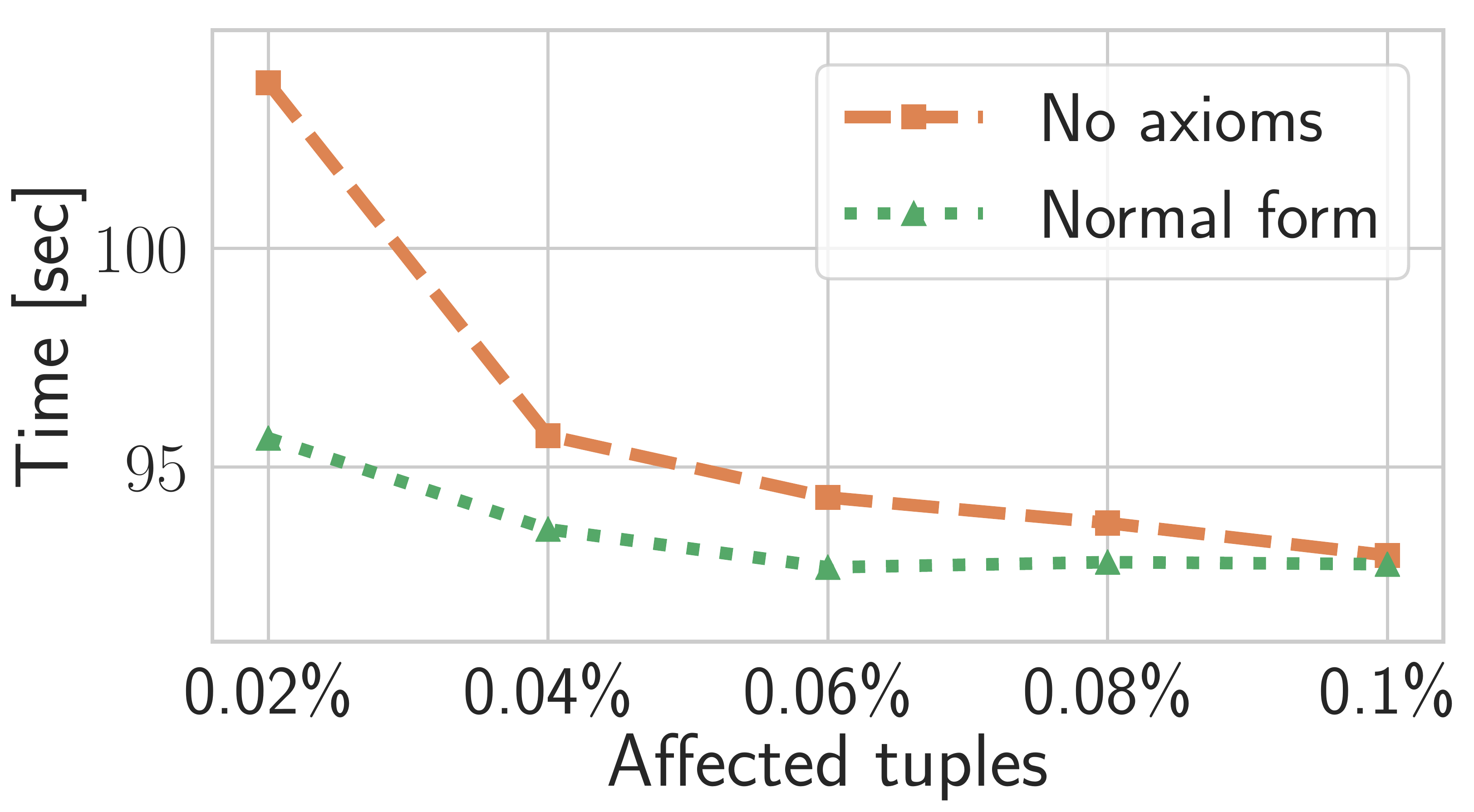}
		\caption{Total number of overall affected tuples}
		\label{fig:query_part}
	\end{subfigure}%
	
	\begin{subfigure}[b]{\textwidth}
		\includegraphics[width = 0.45\linewidth]{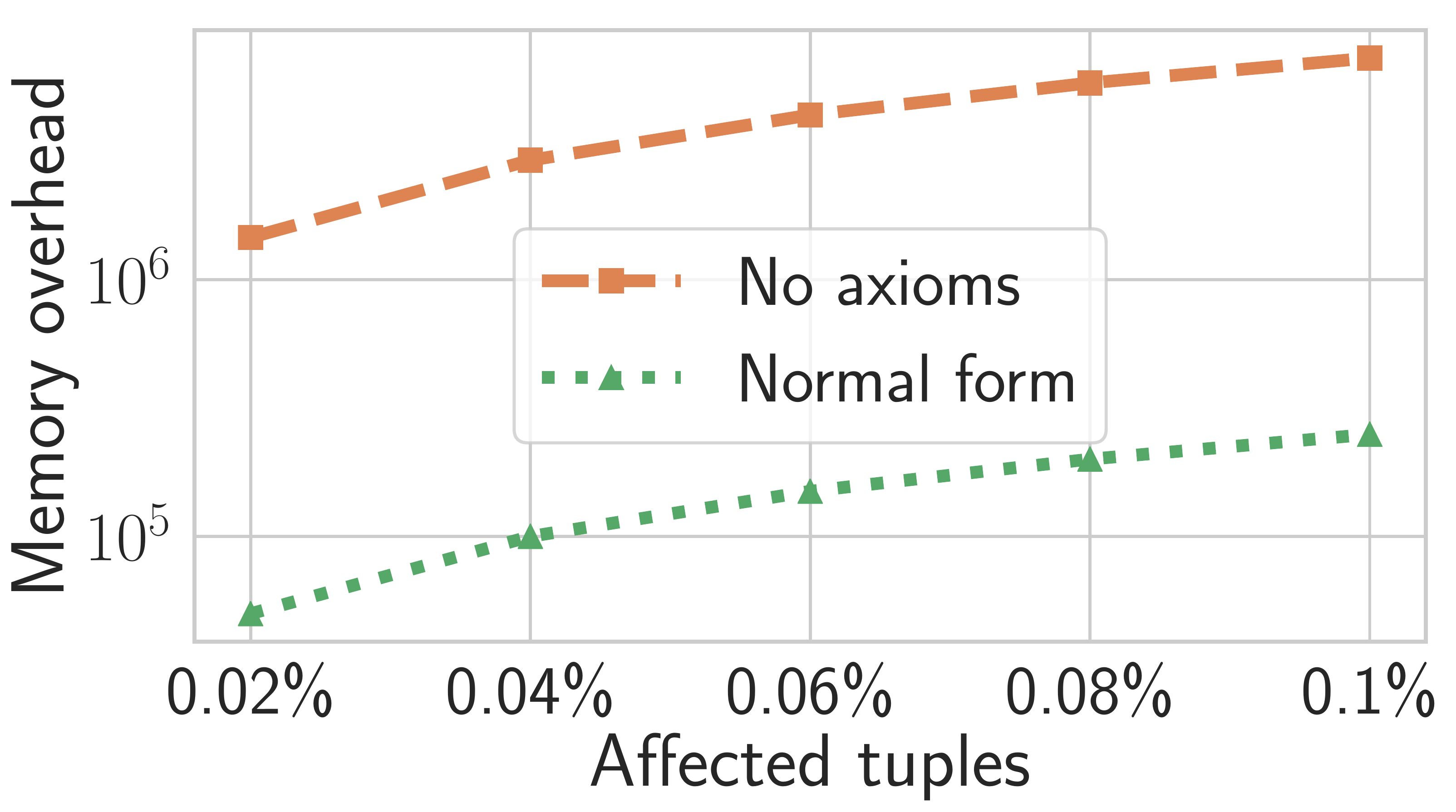}
		~
		\includegraphics[width = 0.47\linewidth]{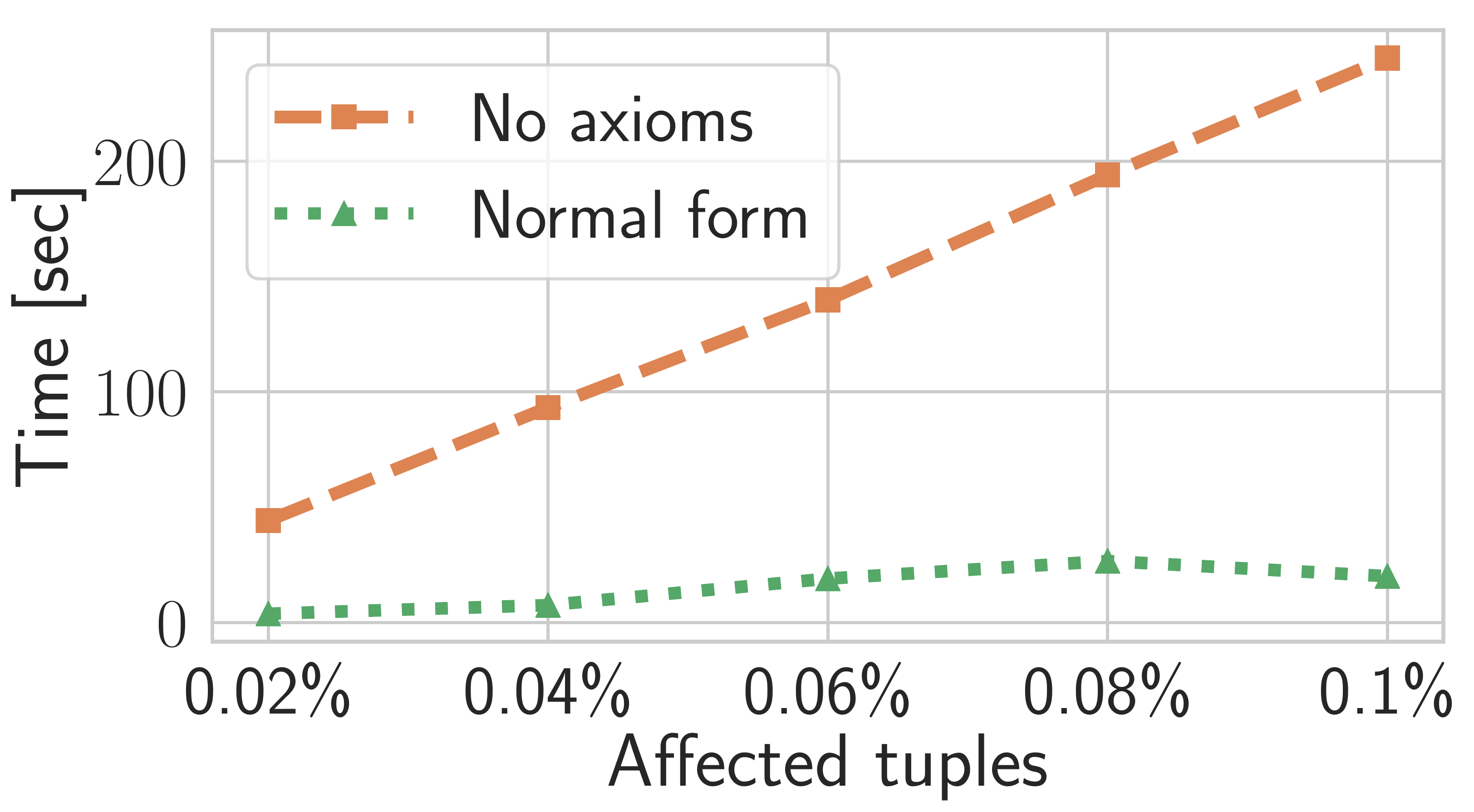}
		\caption{Number of affected tuples per query}
		\label{fig:query_part_cr}
	\end{subfigure}%

	\caption{Naive representation Vs. normal form as a function of number affected tuples} \label{fig:syn_query_par}

\end{figure}


The running time as a function of the total number of affected tuples is presented in the right-hand side of Figure \ref{fig:query_part}. As a
result of the changes in the provenance size when the number of
updates per tuple increases, the overhead of maintaining the
provenance without using the axioms increases. We also observed a
moderate growth for the construction that uses the axioms when the
number of update per tuple increases. This growth is due to the
(relatively small) overhead of minimizing the provenance after each
update.

To highlight the difference between the provenance tracking approaches we examined the effect of the number of tuples affected by each update query. To this end, we fixed the data size, and the transaction length, and increase the number of tuples affected by each update. Figure \ref{fig:query_part_cr} shows the results for a database with $1$M tuples and $5$ update queries. We observed a moderate growth in the memory overhead (left-hand side of the figure), for both methods, with a significant lower overhead using the axioms. There is a notable difference in the running time growth (right-hand side of the figure), as a result of the large overhead incur by managing large provenance expressions without using the axioms.

\subsection{Comparison with MV-semirings \cite{glavic}}

\begin{figure}
	\centering
	\begin{subfigure}[t]{0.5\linewidth}
		\includegraphics[width = \linewidth]{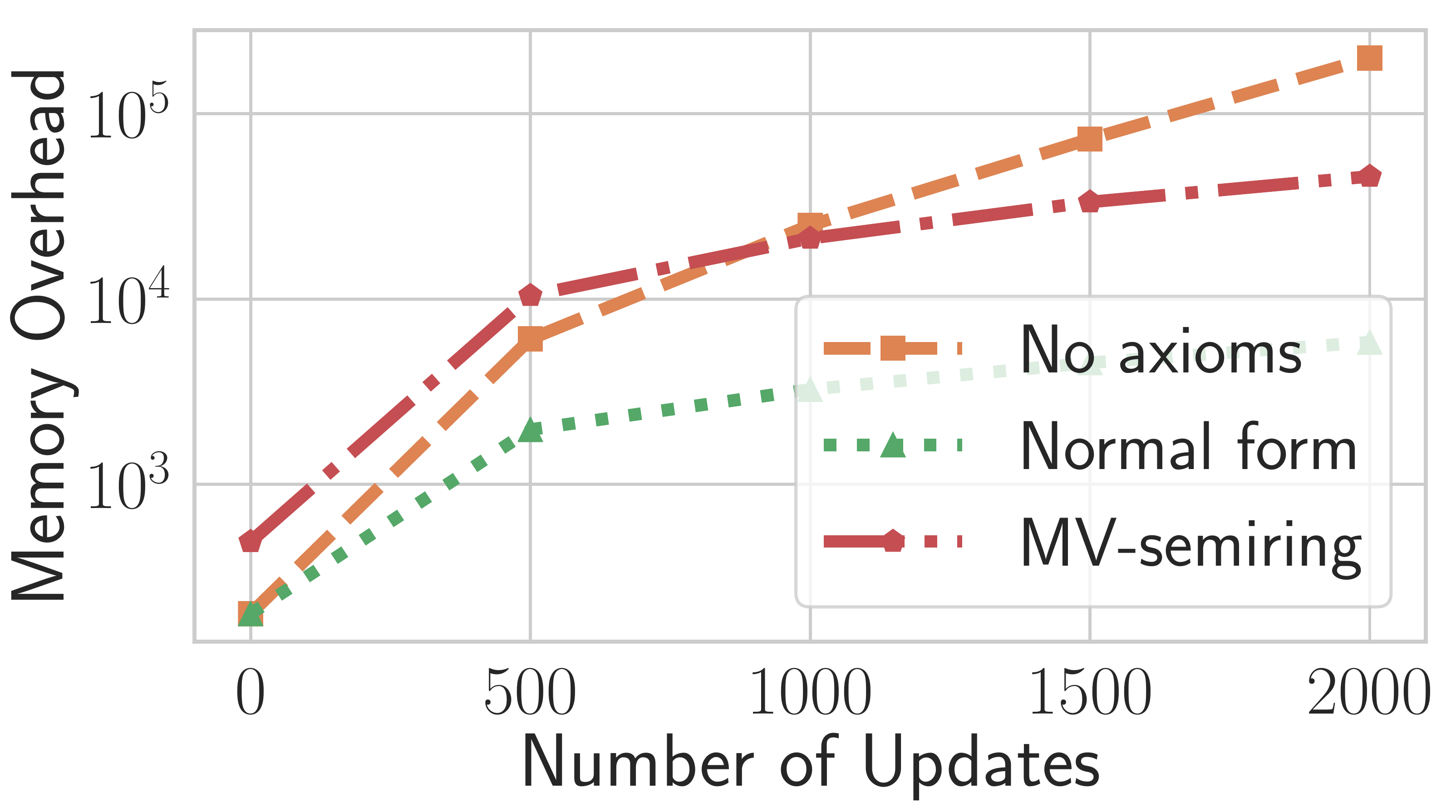}
		\caption{Memory Overhead}
		\label{fig:exp_len}
	\end{subfigure}%
	\begin{subfigure}[t]{0.5\linewidth}
		\includegraphics[width = \linewidth]{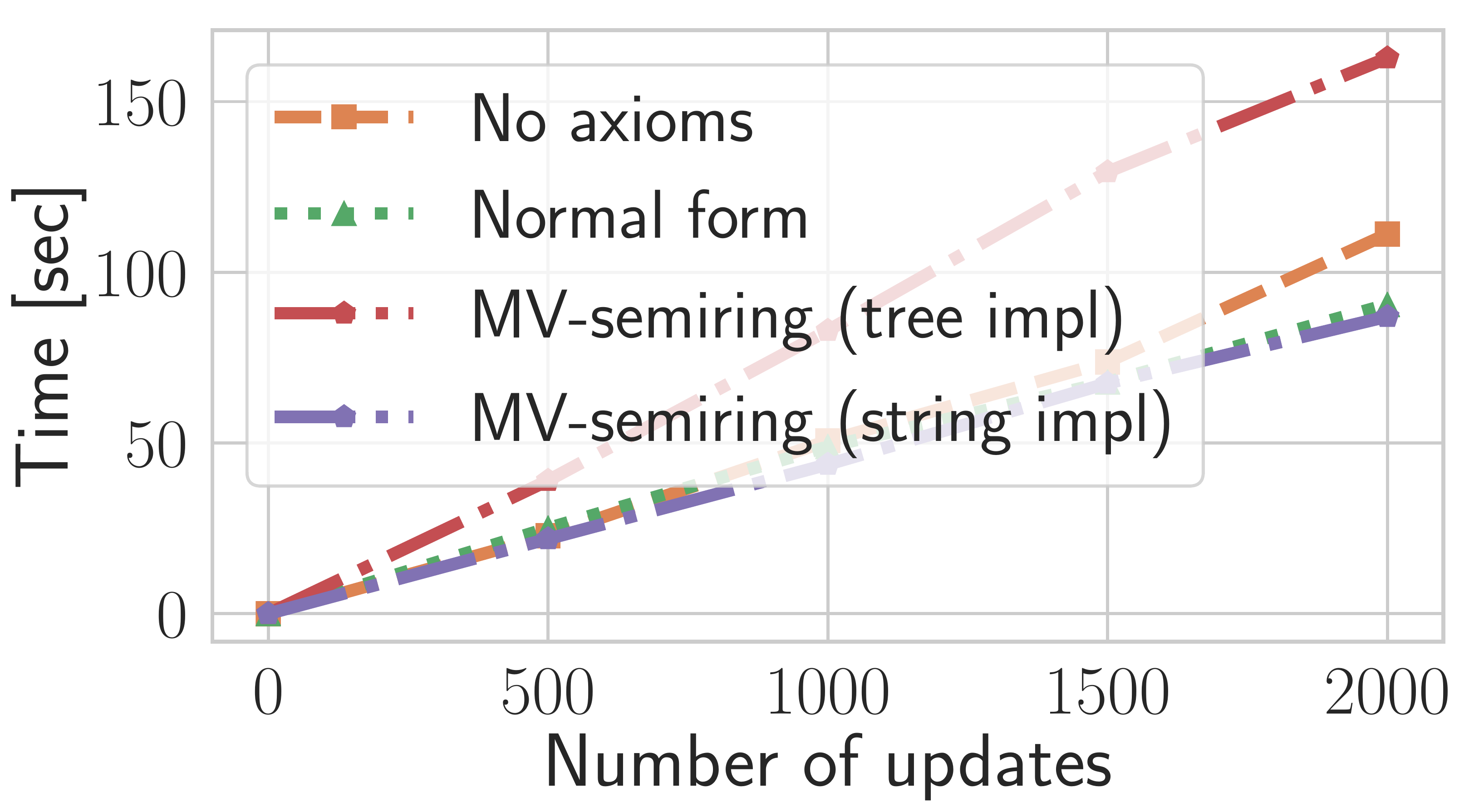}
		\caption{Runtime}
		\label{fig:run_time_glavic}
	\end{subfigure}

	\caption{Comparison with \cite{glavic} (synthetic dataset)} \label{fig:glavic}

\end{figure}

\label{sec:compWithGlavic}
We conclude with an experimental comparison to the MV-semiring model proposed in \cite{glavic}. We have implemented a generator of MV-semiring expressions and used it to compare to our solution. We note that the model of \cite{glavic} is geared towards different use cases than ours and stores somewhat different information. In turn, the intended use case could have significant effect on the implementation (e.g., choice of data structures to represent provenance) and in turn on the algorithms performance. Another difference is in that, as explained above, for our applications we need to ``duplicate" modified tuples, while \cite{glavic} does not.

To this end, in order to get an implementation-independent assessment of the memory consumption, we measure the sum of the total provenance length and the number of database tuples. Figure \ref{fig:exp_len} shows the memory overhead for both approaches  compared with no provenance tracking evaluation. While the provenance length of individual tuples using the model of \cite{glavic} is roughly the same as that of our model without using the axioms, the number of tuples in the resulting database using our model is larger, and thus the memory overhead of our model with no axioms is higher. However when using the axioms, we obtain much smaller expressions than in the MV-semiring model.  


Figure \ref{fig:run_time_glavic} depicts the running time as the function of number of updates. Here again, performance highly depends on implementation details and we demonstrate this using our two different implementations of \cite{glavic}. The first uses strings to represent the provenance (purple line). The running time using this implementation was slightly better than our model, however it has an ``edge": it requires a parsing the provenance as pre-process for each use. The second implementation is tree based (red line), using the anytree python package, which is more similar to the implementation of our model. Our model outperforms the MV-semiring model using this implementation. This is because the trees obtained for the MV-semiring model are deep, and the large overhead for each update is incurred by their recursive structure. We estimate that most reasonable implementations would likely to perform in the range between our two implementations, depending on the intended use.

\section{Related Work}\label{sec:related}

Data provenance has been studied
for different data transformation languages, from relational algebra
to Nested Relational Calculus, with different provenance models
\cite{Trio,GlavicSAM13,GeertsP10,KenigGS13,CheneyProvenance,w3c,ProvenanceBuneman,Olteanu12})
and applications \cite{Suciu,Meliou2,RoyS14,Gatt,GlavicAMH10}. Our work fits the line of research on algebraic provenance, originating in \cite{GKT-pods07} for positive relational algebra. Consequent
algebraic constructions have since been
proposed for various formalisms including aggregate queries
\cite{pods11}, queries with difference \cite{tapp11}, Datalog
\cite{GKT-pods07} and SPARQL queries \cite{GeertsUKFC16}. 

A provenance model for SPARQL queries and updates using a provenance graph was presented in \cite{CheneyUpdates}, and \cite{ProvenanceBuneman} proposes an approach to
provenance tracking for data that is copied among databases using a sequence of insert,
delete, copy, and paste actions. Provenance for updates was also studied in \cite{VC,BunemanCC06} and in \cite{icde16}, where a boolean provenance model for updates was proposed; however, none of these approaches has proposed an algebraic provenance model. Updates are a form of non-monotone reasoning, and as such are related to the notion of relational difference. Algebraic provenance models for queries with difference have been proposed in \cite{tapp11,GeertsP10,GeertsUKFC16}, but naturally none could be directly applied to update queries; in particular using the ``monus" operation of \cite{GeertsP10,GeertsUKFC16} as our minus operation may not result in a structure satisfying the equivalence axioms.  Further exploration of the connections between these models and ours is left for future work. An extension of the semiring model of \cite{GKT-pods07} to account for updates was also studied in \cite{KarvounarakisGIT13}, however the focus there is on the use of provenance in the context of trust and while the work includes an efficient implementation, it falls short of proposing a generic algebraic construction. Closest to our work is the multi-version semiring  (MV-semiring) model \cite{glavic} to which we have extensively compared our solution. 

We have shown a normal form construction that allows significant
reduction of provenance size in the context of hyperplane update
queries. Provenance size reduction has been studied in multiple
additional contexts. In particular, the work of \cite{jag} has shown
a highly effective method for summarizing {\em workflow provenance},
namely the workflow operations (modules) applied to a data item in
the course of execution. The provenance model used in \cite{jag} is
geared towards workflows. It thus captures module invocations, their
arguments etc., which are absent from our model. On the other hand,
it thus does not capture fine-grained combinations of data items
that are captured in algebraic models such as the one we present
here. Consequently, their method (which includes, e.g., argument
factorization) is not applicable to our setting (nor does our method
relevant for their setting). Similar considerations distinguish our
work from \cite{boon}, that studies compression of {\em network
provenance}. Such provenance includes information that is different
from ours, involving a record of network events (albeit using a data
provenance representation through a datalog-like formulation of the
network logic), rather than information on data derivation in
general and data updates in particular. 

In contrast, the work of \cite{olteanu,olteanutapp} does focus on
algebraic provenance expressions. The expressions studied there are
provenance polynomials in the sense of \cite{GKT-pods07} (elements
of the $N[X]$ semiring), which are designed to capture provenance
for SPJU queries but do not suffice for update queries; in
particular no counterpart of a minus operator is present there.
Consequently, the factorization methods in
\cite{olteanu,olteanutapp} are very different from our normal form
construction, in particular because they rely on a different set of
operators and axioms. Our additional operators and different axioms
entail that the methods of \cite{olteanu,olteanutapp} are
inapplicable for provenance compression for update queries (we note
that \cite{olteanu} also studies compression of query results, not
only provenance; this is orthogonal to our work); on the other hand
our construction depends on limiting the queries to hyperplane
queries, which means that our solution is also not applicable to
general SPJU (or even SPJ) queries.

\section{Conclusion and Limitations}
\label{sec:conc}

We have developed a novel algebraic provenance model for hyperplane
update queries and sequences thereof, following the axiomatization
of query equivalence in \cite{vianu}. We have shown that the model
captures the ``essence of computation" for such queries, i.e.,
equivalent transactions yield equivalent provenance. We have shown
means of instantiating the model, towards applications of provenance
in this context. The example instances show the usefulness of the
generic model: by following the axioms, we are guaranteed that our
provenance construction is independent of transaction rewrites. We
have further studied the efficient computation and storage of
provenance, and have shown a minimization technique that leads to
compact provenance representation via a normal form. This again
leverages the axioms, this time in a computational manner. Our
experimental evaluation shows the tractability and usefulness of the
approach, as well as the benefit of using the normal form.

A main limitation of our solution is that it is confined to
hyperplane queries. One may address this challenge towards
supporting update queries with conjunctive conditions and beyond;
yet such attempt would likely cost in the loss of the property of
provenance being preserved under equivalence, since (to our
knowledge) no sound and complete axiomatization is known for these
more expressive fragments. Further exploration of such endeavours is
left for future work.

\paragraph*{Acknowledgements}
This research has been funded by the European Research Council (ERC) under the European Unions Horizon 2020 research and innovation 
programme (Grant agreement No. 804302), and the Israeli Science Foundation (ISF) Grant No. 978/17. Pierre Bourhis is supported by the ANR Project Headwork  ANR-16-CE23-0015.

\bibliographystyle{abbrv}
\bibliography{bibtex}

\end{document}